\documentclass[a4paper,UKenglish,cleveref, autoref]{lipics-v2021}

\hideLIPIcs

\usepackage{amsmath}
\usepackage{amsfonts}
\usepackage{amssymb}
\usepackage{amsthm}
\usepackage{stmaryrd}
\usepackage{enumitem}
\usepackage{xfrac}

\usepackage{cite}
\usepackage{hyperref}

\usepackage{graphicx}
\graphicspath{ {} }

\usepackage[]{mdframed}

\usepackage{verbatim}
\usepackage{multirow}

\usepackage{wrapfig}

\newcommand{\mystrut}{\raisebox{0ex}[0.37cm]{}}

\usepackage[appendix=append,bibliography=common]{apxproof}

\usepackage{tikz}
\usetikzlibrary{automata, positioning, arrows, shapes, snakes}

\newtheoremrep{theorem}{Theorem}
\newtheoremrep{definition}[theorem]{Definition}
\newtheoremrep{lemma}[theorem]{Lemma}
\newtheoremrep{proposition}[theorem]{Proposition}
\newtheoremrep{corollary}[theorem]{Corollary}
\newtheoremrep{example}[theorem]{Example}
\newtheoremrep{remark}[theorem]{Remark}

\title{Witnesses for Fixpoint Games on Lattices}

\author{Barbara K\"onig}{Universit\"at Duisburg-Essen, Germany}{barbara_koenig@uni-due.de}{https://orcid.org/0000-0002-4193-2889}{}
\author{Karla Messing}{Universit\"at Duisburg-Essen, Germany}{karla.messing@uni-due.de}{https://orcid.org/0009-0003-1019-6449}{}

\authorrunning{B. K\"onig and K.Messing} 

\Copyright{Barbara K\"onig and Karla Messing} 

\ccsdesc[500]{Theory of computation~Concurrency}
\ccsdesc[500]{Theory of computation~Logic and verification}

\keywords{Fixpoint games, lattice theory, witnesses, bisimilarity,
  Galois connections, Hennessy-Milner
  theorem} 

\nolinenumbers 

\relatedversiondetails[cite={km:witnesses-fixpoint-games}]{Conference
  Version}{https://doi.org/10.4230/LIPIcs.ICALP.2026.182}

\funding{The authors were supported by the Deutsche
  Forschungsgemeinschaft (DFG, German Research Foundation) -- project
  number 434050016 (SpeQt)}

\DeclareMathOperator{\degr}{\mathsf{deg}}
\DeclareMathOperator{\cdegr}{\mathsf{cdeg}}

\DeclareMathOperator{\lo}{\mathsf{log}}
\DeclareMathOperator{\be}{\mathsf{beh}}
\DeclareMathOperator{\cl}{\mathsf{cl}}

\newcommand{\lt}[1]{\mathbb{#1}}
\newcommand{\du}[1]{\dot{#1}}

\newcommand{\pr}[1]{#1}
\newcommand{\mb}[1]{M_{\lt{#1}}}
\newcommand{\jb}[1]{J_{\lt{#1}}}

\newcommand{\strat}{S}
\newcommand{\auxp}[1]{W_{p,#1}}
\newcommand{\auxd}[1]{W_{d,#1}}
\newcommand{\aps}{W_p}
\newcommand{\apc}{\auxp{\lo}}
\newcommand{\ads}{W_d}
\newcommand{\adc}{\auxd{\lo}}
\newcommand{\adwa}{Z}

\DeclareMathOperator{\wal}{\triangleleft\!\triangleleft}
\DeclareMathOperator{\wag}{\triangleright\!\triangleright}

\newcommand\twoheaduparrow{\mathrel{\rotatebox[origin=c]{90}{$\twoheadrightarrow$}}}
\newcommand\wadownarrow{\mathrel{\rotatebox[origin=c]{90}{$\wal$}}}

\renewcommand{\epsilon}{\varepsilon}
\renewcommand{\phi}{\varphi}

\begin{document}

\maketitle

\begin{abstract}
  We construct witnesses that can be used to derive strategies in
  fixpoint games and provide proof that the least fixpoint of a
  function is either \emph{above} or \emph{not below} some given
  bound. We rely on a lattice-theoretical approach, including a Galois
  connection that connects a lattice representing the ``logic
  universe'', where the witness lives, with another lattice
  representing the ``behaviour universe'', over which the function is
  defined. In fact we consider two types of games -- primal and dual
  games -- and in both cases show how to derive winning strategies in
  the game from witnesses and construct witnesses from strategies. The
  two games differ wrt.\ their rules and the choice of basis of the
  lattice.

  The theory can be instantiated to well-known examples: in particular
  we compare with the construction of distinguishing formulas in
  standard bisimilarity and behavioural metrics for probabilistic
  systems. As a new case study we consider witnesses for certifying
  lower bounds for the termination probability for Markov chains.
\end{abstract}

\section{Introduction} 
In concurrency theory, there are many concepts that arise as least or
greatest fixpoints of functions, for instance termination probability
in Markov chains \cite{bk:principles-mc,gs:markov-chains}, values for
Markov decision processes \cite{b:markovian-decision-process} and
simple stochastic games \cite{c:complexity-stochastic-game},
bisimilarity \cite{s:bisimulation-coinduction,stirling99} or
behavioural metrics
\cite{dgjp:metrics-labelled-markov,cgt:logical-bisim-metrics,afs:linear-branching-metrics,flt:quantitative-spectrum,bw:behavioural-pseudometric,bbkk:coalgebraic-behavioral-metrics}. Limiting
the discussion to least fixpoints (which can always be done without
loss of generality via dualization), such a fixpoint characterization
can also provide upper bounds via pre-fixpoints. However here we are
instead interested in lower bounds, respectively in showing that upper
bounds do not hold. In the case of bisimilarity, the first task is
equivalent to certifying that two states \emph{are} bisimilar, while
the second means certifying that they are \emph{not}. The latter
notion is also known as apartness and has recently garnered increased
attention \cite{gj:apartness-bisimulation,tbkr:proving-beh-apartness}.

One way to witness non-bisimilarity is via so-called distinguishing
formulas \cite{c:automatically-explaining-bisim}. The Hennessy-Milner
theorem \cite{hm:hm-logic} guarantees that in a finitely-branching
transition system two states are bisimilar if and only if they
satisfy the same formulas of the Hennessy-Milner logic, a form of
modal logic \cite{stirling99}. Hence, if two states are non-bisimilar,
there must be a formula distinguishing them. A similar notion of
distinguishing formulas has been studied in the quantitative setting
of behavioural metrics \cite{RadyBreugelExplProbBisim}, where such a
formula certifies that the distance of two states is above some bound
(or a sequence of formulas is constructed that certifies the distance
in the limit). Distinguishing formulas can in particular be used for
purposes of explainability, e.g. explaining why two states are
non-bisimilar as opposed to just stating this fact. Although their
existence follows directly from the Hennessy-Milner theorem, their
construction in general does not and has to be described separately.

Our aim is to lift the notion of distinguishing formulas -- here
called \emph{witnesses} for greater generality -- to a
lattice-theoretic setting, where we assume a Galois connection
connecting the ``logic universe'' with the ``behaviour universe''. (In
the case of bisimilarity logical formulas live in the logic universe,
while bisimulation relations live in the behaviour universe.)

Apart from (least) fixpoints and logics, there is a third equivalent
characterization that relies on games, that also provides us with the
concept of (winning) strategies. In particular, we consider fixpoint
games over continuous lattices \cite{POPL19FPGames} and we will show
how to translate witnesses into game strategies and vice versa. In
fact, there are two games for least fixpoints, a primal game and a
dual game, where the dual game is derived by taking the game for the
greatest fixpoint and dualizing the order. In both games the
existential player $\exists$ makes the first move, followed by the
universal player $\forall$. In the primal game the witness provides
the strategy for the existential player, while in the dual game we
obtain the strategy of the universal player, hence both cases have to
be treated separately. Both versions also have different
requirements: in the primal case we assume continuity of the behaviour
function and the lattice under consideration must have a basis
consisting of irreducibles. In the dual case the function has to be
co-proper.

Our results and constructions assume a continuous (resp.
co-continuous) lattice and use the so-called way-below and
way-above relations as well as notions from Scott topology
\cite{CompContLattices}. In particular we focus on constructing
\emph{finitary} formulas corresponding to \emph{finitary}
strategies.
As applications we will consider bisimilarity as well as behavioural
metrics for probabilistic systems. As a new case study we will show
how to provide witnesses for certifying lower bounds for the
termination probability for Markov chains.

The theory provides a framework and guideline for witnesses and
distinguishing formulas, in particular we provide generic algorithms
for deriving strategies from witnesses and witness
construction. We believe that viewing this construction in a more
general framework is valuable, in particular since its usefulness
seems to extend beyond behavioural equivalences and metrics.

\subparagraph*{Setup:} Assume a monotone function
$\be \colon \lt{B}\to \lt{B}$ (also called \emph{behaviour function})
on a complete lattice $(\lt{B},\sqsubseteq)$ and we are interested in
its least fixpoint $\mu \be$. Then -- due to the Knaster-Tarski
theorem \cite{t:lattice-fixed-point} -- in order to show that a given
lattice element $b\in\lt{B}$ is above $\mu \be$
($\mu \be \sqsubseteq b$), it is sufficient to find $p\in \lt{B}$ that
satisfies $\be(p)\sqsubseteq p\sqsubseteq b$ (where the first
inequality states that $p$ is a pre-fixpoint), from which we can
immediately deduce $\mu \be\sqsubseteq b$.

Our interest however lies in certifying the negated statement via a
witness, that is $\mu \be\not\sqsubseteq b$ (or alternatively
$b\ll \mu\be$ where $\ll$ is the way-below order). For the application
examples this means to show that two states are not bisimilar or that
the behavioural distance of two states is strictly larger than some
bound. In this case it helps to consider a fixpoint game
\cite{POPL19FPGames} where the aim of one player ($\exists$, defender)
is to prove that the inequality $\mu \be\sqsubseteq b$ holds, while it
is the aim of the opponent ($\forall$, attacker) to show that it does
not.

In particular, we want to represent the strategy of the attacker. One
way to do this is to look for a witness (also known as
distinguishing formula
\cite{c:automatically-explaining-bisim,RadyBreugelExplProbBisim}),
from which such a strategy can be derived. In lattice-theoretic terms,
such witnesses live in another lattice -- denoted by $\lt{L}$ -- that
is related to $\lt{B}$ via a Galois connection
$\alpha\colon \lt{L}\to\lt{B}$, $\gamma\colon \lt{B}\to \lt{L}$. We
assume the existence of a \emph{logic function}
$\lo\colon \lt{L}\to \lt{L}$ such that the left adjoint of the Galois
connection maps its least fixpoint to the least fixpoint of the
behaviour function ($\alpha(\mu \lo) = \mu \be$). Now the aim is to
look for $\ell\sqsubseteq \mu\lo$ that induces a strategy for the
fixpoint game for $\be$ on $\lt{B}$.

In fact, it turns out that there are two games one can consider on
$\lt{B}$, a primal game and a dual game. Depending on which game
is chosen, this influences the notion of witnesses and the translation
between witnesses and strategies.

\section{Preliminaries}

We recall some basic definitions about lattices, Galois connections,
the Hennessy-Milner framework based on Galois connections as
introduced in \cite{HMLGalois}, as well as Scott topology
\cite{CompContLattices}.

\subparagraph*{Partial Orders and Lattices:}

A \emph{complete lattice} $(\lt{L},\sqsubseteq)$ consists of a set
$\lt{L}$ and a partial order $\sqsubseteq$ defined on $\lt{L}$
such that every subset $Y\subseteq\lt{L}$ has a least upper bound
$\bigsqcup Y$ and a greatest lower bound $\bigsqcap Y$. The bottom and
top elements of $\lt{L}$ are denoted by $\bot$, $\top$ respectively

By Knaster-Tarski \cite{t:lattice-fixed-point} every monotone function
$f\colon\lt{L}\to\lt{L}$ has a least fixpoint $\mu f$ and a greatest
fixpoint $\nu f$. The least fixpoint can be obtained by Kleene
iteration over the ordinals, i.e.,
$\mu f = \bigsqcup_{i\in\mathsf{Ord}} f^i(\bot)$, dually for the
greatest fixpoint $\nu f$.

We recall some notions on (co-)continuous lattices
from~\cite{CompContLattices}. 

Let $(\lt{L},\sqsubseteq)$ be a lattice. A subset $D\subseteq \lt{L}$
is \emph{directed} if $D\neq \emptyset$ and every finite subset of $D$
has an upper bound in $D$.
For $\ell,\ell'\in \lt{L}$ we say that $\ell$ is \emph{way-below}
$\ell'$ ($\ell \ll \ell'$) iff for all directed subsets
$D\subseteq \lt{L}$, the relation $\ell'\sqsubseteq \bigsqcup D$
implies the existence of $d\in D$ with $\ell\sqsubseteq d$.
The lattice $\lt{L}$ is called continuous if for all $\ell\in \lt{L}$,
$\ell=\bigsqcup \{\ell'\in \lt{L} \mid \ell' \ll \ell\}$.
(Note that on a continuous lattice, whenever $\ell\ll\ell'$ 
and $\ell'\sqsubseteq \bigsqcup D$ for 
a directed set $D$, there even exists $d\in D$ such that $\ell\ll d$
\cite{CompContLattices}.) 

We also use the dual notions and call a set $F$ filtered if
$F\not = \emptyset$ and every finite subset of $F$ has a lower bound
in $F$; and say that $\ell$ is \emph{way-above} $\ell'$
($\ell'\wal \ell$) iff for all filtered subsets $F\subseteq \lt{L}$,
the relation $\ell'\sqsupseteq \bigsqcap F$ implies the existence of
$f\in F$ such that $\ell\sqsupseteq f$.\footnote{Note that in this
  terminology, ``way-above'' is not simply the inverse of ``way
  below''.}  The lattice $\lt{L}$ is called co-continuous if for all
$\ell\in \lt{L}$,
$\ell=\bigsqcap \{\ell'\in \lt{L} \mid \ell' \wag \ell\}$.

\smallskip

For $\ell\in \lt{L}$, we write
$\mathop{\uparrow} \ell = \{\ell'\in \lt{L} \mid \ell\sqsubseteq \ell' \}$,
$\mathop{\downarrow} \ell = \{\ell'\in \lt{L} \mid \ell\sqsupseteq \ell' \}$,
$\mathop{\twoheaduparrow} \ell = \{\ell'\in \lt{L} \mid \ell\ll \ell' \}$ and
$\mathop{\wadownarrow} \ell = \{\ell'\in \lt{L} \mid \ell\wag \ell' \}$.  The
set of ordinals will be denoted by $\mathsf{Ord}$.

\begin{example}
  \label{ex:way-below-above}
  Let $X$ be a set. For the powerset lattice
  $(\mathcal{P}(X),\subseteq)$, a set $S\subseteq X$ is way-below
  $S'\subseteq X$ ($S\ll S'$) if $S\subseteq S'$ and $S$ finite. Since
  every set arises as the union of finite sets, powerset lattices are
  continuous.  The sets $S,S'$ are in the way-above relation
  ($S\wal S'$) iff $S\subseteq S'$ and $S'$ is co-finite (complement
  wrt. the superset $X$ of a finite set). Powerset lattices are
  co-continuous, since every set is the intersection of co-finite
  sets.

  For the lattice $([0,1],\le)$ it holds that two elements
  $r,s\in[0,1]$ are in the way-below relation ($r\ll s$) if both are
  $0$ or $r<s$. Every real number is the supremum of strictly smaller
  numbers, hence $[0,1]$ is a continuous lattice as
  well. It holds that $r\wal s$ iff both are $1$
  or $r < s$.\footnote{Note that $1\le \bigsqcap F$ can only hold if
    $F = \{1\}$ due to the requirement that every filtered set is
    non-empty.}  The lattice $[0,1]$ is co-continuous, since every
  number is the infimum of numbers that are strictly larger.
\end{example}

We need the notion of \emph{basis}, i.e., a subset of the lattice that
allows generating each element (either via join or meet).  A
\emph{join basis} of a lattice $\lt{L}$ is a subset
$\jb{L}\subseteq \lt{L}$ such that for each element $\ell\in \lt{L}$,
$\ell = \bigsqcup \{b\in \jb{L} \mid b\sqsubseteq \ell\}$.  A
\emph{meet basis} of $\lt{L}$ is a subset $\mb{L}\subseteq \lt{L}$
such that for each element $\ell\in \lt{L}$,
$\ell = \bigsqcap \{m\in \mb{L} \mid m\sqsupseteq \ell\}$.

When $\lt{L}$ is a continuous lattice and $\jb{L}$ is a join basis
of $\lt{L}$, then it holds -- for all $\ell\in\lt{L}$ -- that
$\ell = \bigsqcup_{b\ll \ell, b\in \jb{L}} b$. 
The dual holds for a meet basis and the way-above relation.

\begin{example}
  For a powerset lattice $(\mathcal{P}(X),\subseteq)$, one choice of
  join basis is to consider all singletons. A possible meet basis is
  the set of all co-singletons, i.e., all sets of the form
  $\overline{\{x\}} = X\backslash\{x\}$ for $x\in X$.

  For the lattice $([0,1],\le)$ a possible join basis consists of
  $(0,1]\cap\mathbb{Q}$.
\end{example}

We also need the following notion of irreducibility. We call an element 
$b\in \lt{L}$ of a lattice $\lt{L}$ \emph{way-below irreducible} 
(or simply \emph{irreducible})
if whenever $b\ll \bigsqcup F$ for $F\subseteq \lt{L}$ finite, then
$b\ll f$ for some $f\in F$ (where $F$ is not necessarily directed).

Note that since $\bot \ll \bot = \bigsqcup \emptyset$, $\bot$ can
  not be an irreducible. 

\begin{example}
  For the powerset lattice $(\mathcal{P}(X),\subseteq)$, singletons
  are way-below irreducible. For the lattice $([0,1],\le)$, all
  elements -- apart from $0$ -- are way-below irreducible.
\end{example}

\subparagraph*{Galois Connections and Adjoint Logic:}

We summarize the Galois connection approach from \cite{HMLGalois} that
can be used to give an abstract account of the Hennessy-Milner
theorem.  Intuitively, the Galois connection relates a ``logical
universe'' and a ``behavioural universe''.

Let $\lt{L}$, $\lt{B}$ be two lattices.  A \emph{Galois
  connection} from $\lt{L}$ to $\lt{B}$ is a pair
$\alpha\dashv \gamma$ of monotone functions
$\alpha\colon \lt{L}\to\lt{B}$,
$\gamma\colon \lt{B}\to\lt{L}$ such that for all $\ell\in \lt{L}$:
$\ell\sqsubseteq \gamma(\alpha(\ell))$ and for all $d\in \lt{B}$:
$\alpha(\gamma(d)) \sqsubseteq d$.
It is well-known that a left adjoint $\alpha$ preserves all suprema
while a right adjoint $\gamma$ preserves all infima.

\begin{wrapfigure}{l}{.32\textwidth}
    \begin{tikzpicture}[node distance=1.5 and 1, shorten >=1pt, >=stealth', semithick]
      \begin{scope}[state, inner sep=3pt, minimum size=0pt]
        \draw node [] (q1) {\( \lt{L} \)};
        \draw node [right=of q1] (q2) {\(\lt{B}\)};
      \end{scope}
  
      \begin{scope}[->]
        \draw (q1) edge [bend left] node [midway, above] {\(\alpha\)} (q2);
        \draw (q2) edge [bend left] node [midway, below] {\(\gamma\)} (q1);
        \draw (q1) edge [loop left] node [below left] {\(\lo\)} (q1);
        \draw (q2) edge [loop right] node [below right] {\(\be\)} (q2);
      \end{scope}
    \end{tikzpicture}
\end{wrapfigure}

\noindent We also fix two monotone functions
$\lo\colon\lt{L}\to \lt{L}$ and $\be\colon \lt{B}\to \lt{B}$ and by
Knaster-Tarski \cite{t:lattice-fixed-point} they both have least
fixpoints (denoted by $\mu\lo$, $\mu\be$).
  
From \cite{HMLGalois} it follows that $\alpha$ preserves least
fixpoints of $\lo$, $\be$ ($\alpha(\mu \lo) = \mu \be$) whenever
$\alpha\circ\lo = \be\circ \operatorname{\alpha}$.  This condition
also implies that $\alpha$ preserves all stages of the Kleene
iteration over the ordinals, i.e., $\alpha(\lo^i(\bot)) = \be^i(\bot)$
for all $i\in \mathsf{Ord}$.

\begin{example}[Bisimilarity \& Hennessy-Milner Logic]
  \label{ex:running-intro}
  As a running example we consider standard bisimilarity
  \cite{s:bisimulation-coinduction}.
 Here we follow
  \cite{HMLGalois} (using arbitrary relations instead of only
  equivalence relations) and work -- for simplicity -- with unlabelled
  transition systems, i.e., pairs $(X, \to)$ consisting of a state
  space $X$ and a transition relation $\to\ \subseteq\,X \times X$. We
  write 
  $\mathit{succ}(x) = \{x'\in X\mid x\to x'\}$ and assume finite
  branching, i.e., $\mathit{succ}(x)$ is finite for all $x\in X$. 
  
We define lattices
$\lt{L} = (\mathcal{P} (\mathcal{P}(X)),\subseteq)$,
$\lt{B} = (\mathit{Rel}(X),\supseteq)$, where $\mathit{Rel}(X)$ is the
set of all relations $R\subseteq X\times X$. The Galois connection is:
  \begin{eqnarray*}
    \alpha(\mathcal{S}) & = & \{(x_1,x_2)\in X\times X \mid \forall S\in
    \mathcal{S}\colon (x_1\in S\iff x_2\in S)\} \\
    \gamma(R) & = & \{ S\subseteq X \mid \forall (x_1,x_2)\in R\colon
    (x_1\in S\iff x_2\in S) \}. 
  \end{eqnarray*}
  Intuitively, $\alpha$ generates an equivalence on $X$ from a set of
  subsets of $X$ and $\gamma$ maps a relation to all subsets of $X$
  that are closed under this relation.

  As logic function we consider
  $\lo\colon \mathcal{P}(\mathcal{P}(X))\to
  \mathcal{P}(\mathcal{P}(X))$ with
  $\lo(\mathcal{S}) = \Diamond[\cl^{\cup,\lnot}_f(\mathcal{S})]$,
  where $f[A] = \{f(a)\mid a\in A\}$ for a function $f\colon X\to Y$ and
    $A\subseteq X$, $\cl^{\cup,\lnot}_f$ closes $\mathcal{S}$ under finite
  intersections and complement and
  $\Diamond(S) = \{x\in X \mid \exists x'\in S:\ x \to x'\}$.

  The corresponding behaviour function
  $\be\colon \mathit{Rel}(X)\to \mathit{Rel}(X)$ is the standard
  (monotone) bisimilarity map: given $R\in \mathit{Rel}(X)$ it
  holds that $x_1 \mathrel{\be (R)} x_2$ iff
  \[
    \forall y_1\in\mathit{succ}(x_1)\,\exists {y_2\in\mathit{succ}(x_2)}
    \colon y_1 \mathrel R y_2 \land \forall
    y_2\in\mathit{succ}(x_2)\,\exists {y_1\in\mathit{succ}(x_1)} \colon
    y_1 \mathrel R y_2.
  \]
  From \cite{HMLGalois} it follows that
  $\alpha\circ \lo = \be\circ\operatorname{\alpha}$.  In particular
  this means that $(x_1,x_2)\in \alpha(\mu\lo)$ iff $x_1,x_2$ are
  bisimilar (denoted by $x_1\sim x_2$), which is exactly the statement
  of the Hennessy-Milner theorem \cite{hm:hm-logic}.
\end{example}

In the rest of the paper, we assume that $\lt{L}$ and $\lt{B}$
are both complete lattices, with monotone endofunctions $\lo$
respectively $\be$, and a Galois connection $\alpha\dashv \gamma$ as
above, such that $\alpha\circ\lo = \be\circ \operatorname{\alpha}$.
Intuitively, this means that the functions $\lo$ and $\be$ satisfy
the natural requirement of being ``in step'' with each other.

\subparagraph*{Scott Topology:} 

We need some concepts from Scott topology, in particular open sets, as
well as continuous and proper maps (cf. \cite{CompContLattices}).

A subset $O\subseteq \lt{L}$ of a complete lattice
$(\lt{L},\sqsubseteq)$ is called \emph{(Scott) open} iff
$O= \mathop{\uparrow} O$, and $\bigsqcup D\in O$ implies
$D\cap O \not= \emptyset$ for all directed sets $D\subseteq\lt{L}$.
The collection of all Scott open subsets of $\lt{L}$ is called the
\emph{Scott topology} of $\lt{L}$ and denoted by $\sigma(\lt{L})$.  A
set $C\subseteq \lt{L}$ is called \emph{compact} if whenever $C$ can
be covered by open sets ($C\subseteq \bigcup_{i\in I} O_i$, $O_i$
open), there exists a finite subcover
($C\subseteq \bigcup_{i\in J} O_i$ where $J\subseteq I$ and $J$
finite).

In a continuous lattice the sets of the form $\mathop{\twoheaduparrow} \ell$
($\ell\in \lt{L}$) provide a basis for the Scott open sets
\cite{CompContLattices}. That is, each Scott open set is the union of
such sets.

Furthermore each set of the form $\mathop{\uparrow} \ell$ (for
$\ell\in \lt{L}$) and their finite unions can easily be seen to be
compact.

\begin{example}
  \label{ex:powerset-open-compact}
  In a powerset lattice $\mathcal{P}(X)$, the open sets are the
  families of finite character, i.e., $O\subseteq \mathcal{P}(X)$ such
  that $S\in O$ iff $F\in O$ for some finite subset $F$ of $S$ 
  \cite{CompContLattices}. 
\end{example}

A function $f:\lt{S}\rightarrow \lt{T}$ between lattices is called
\emph{(Scott-)continuous} if and only if it is continuous with respect
to the Scott topologies (i.e. $f^{-1}(U)\in\sigma(\lt{S})$ for all
$U\in\sigma(\lt{T})$).  Equivalently, $f$ is continuous if it preserves
suprema of directed sets (i.e. $f(\bigsqcup D)= \bigsqcup f(D)$ for
all directed subsets $D$ of $\lt{S}$).  
Note that a continuous function reaches its smallest fixpoint in
$\omega$ steps.

A map
$f:\lt{S}\rightarrow \lt{T}$ is \emph{proper} if the inverse image of
a compact set is also compact. It is called proper wrt.
$B\subseteq \lt{T}$ if the inverse image of each $\mathop{\uparrow} \ell$ (for
$\ell\in B$) is compact.

All these notions have duals that can be obtained by flipping the
order and will be denoted by co-open, co-compact, co-continuous and
co-proper.

\begin{examplerep}[Continuity and Properness]
  \label{ex:running-continuous-proper}
  Consider the bisimilarity map $\be$ introduced earlier
  (cf. Example~\ref{ex:running-intro}). Since the transition system is
  assumed to be finitely branching, this map is co-continuous wrt.\
  $\subseteq$ (see also \cite{s:bisimulation-coinduction}).  It is
  also proper wrt.\ $\subseteq$ and a basis consisting of singletons
  $\{(x_1,x_2)\}\subseteq X\times X$.  Wrt.\ $\supseteq$ the map is
  hence continuous and co-proper wrt.\ a basis of co-singletons.
\end{examplerep}

\begin{proof}
  
  We first consider the co-continuity of $\be$: for $x_1,x_2\in X$ we
  have to show that $(x_1,x_2)\in \bigcap_{i\in I} \be(R_i)$ implies
  $(x_1,x_2)\in \be(\bigcap_{i\in I} R_i)$ (for a filter
  $(R_i)_{i\in I}$), the other direction is obvious. Assume that
  $(x_1,x_2)\in \bigcap_{i\in I} \be(R_i)$ and $x_1\to y_1$. Then
  $x_2$ has an answering move for each $i\in I$, i.e., $x_2\to y_2^i$
  such that $(y_1,y_2^i)\in R_i$. Since we have finite branching the
  set $F = \{(y_1,y_2^i)\mid i\in I\}$ is finite. Hence some
  $(y_1,y_2^i)$ must be contained in the intersection
  $\bigcap_{i\in I} R_i$. If that were not the case, then for each
  element of $p\in F$ there exists $R_{i_p}$ that does not contain
  $p$. Then a set $S\subseteq\bigcap_{p\in F} R_{i_p}$ is in the
  filter, but does not contain any element of $F$, which is a
  contradiction. (Analogously for a move of $x_2$.) Since for all
  $y_1$ with $x_1\to y_1$ some $(y_1,y_2^i)$ is contained in
  $\bigcap_{i\in I} R_i$ (and vice versa), we can infer that
  $(x_1,x_2)\in\be(\bigcap_{i\in I} R_i)$.

  Furthermore $\be$ is proper: given a basis element
  $b = \{(x_1,x_2)\}$, it holds that
  $\be^{-1}(\mathop{\uparrow} b) = \{C \subseteq X\times X\mid
  (x_1,x_2)\in \be(C)\}$. Since the transition system is finitely
  branching ($\mathit{succ}(x_1)$, $\mathit{succ}(x_2)$ are finite),
  there are only finitely many minimal such couplings $C$ of
  $\mathit{succ}(x_1),\mathit{succ}(x_2)$ and the set is hence
  compact.
\end{proof}

\section{Witnesses}
\label{sec:dist-form}

We will now introduce witnesses, analogous to distinguishing formulas
in Hennessy-Milner logic that witness the non-bisimilarity or distance
between two states (see \cite{stirling99}). A witness guarantees
a lower bound for the least fixpoint respectively shows that a given
element is not an upper bound.

We will later explain how witnesses can be constructed and how they
are related to the winning strategies of the attacker in a game.
We distinguish between primal and dual witnesses depending on the type
of game for which they are used.
In order to clarify the notation, we decorate lattice elements with a
dot in the dual case.

\begin{definition}[witness]
  \label{def:witness}
  Let $\lt{L}$, $\lt{B}$ be two lattices with a Galois
  connection $\alpha\dashv \gamma$ and monotone functions
  $\lo\colon\lt{L}\to \lt{L}$,
  $\be\colon\lt{B}\to \lt{B}$. Let $b\in\lt{B}$.

  A \emph{primal witness} for $b$ is an element
  $\pr{a}\in\lt{L}$ such that $\pr{a}\ll \mu\lo$ and
  $b\ll \alpha(\pr{a})$.

  A \emph{dual witness} for $\du{b}$ is an element
  $a\in\lt{L}$ such that $a\ll \mu\lo$ and
  $\alpha(a) \not\sqsubseteq \du{b}$.
\end{definition}

Intuitively a witness is a formula of the logic that is strong enough
to show that either $b$ is a lower bound or $\du{b}$ is not an 
upper bound for the least fixpoint of $\be$.

\begin{toappendix}
\begin{lemma}
  \label{lem:auxiliary-functions-1}
  Let $\lt{L}$ be a continuous lattice with join basis $\jb{L}$ and
  the join basis $\jb{B}$ of the lattice $\lt{B}$ contain only
  irreducibles.
  \begin{enumerate}
  \item \label{lem:aux-1-1} Let $b\in \jb{B}$ and $\ell \in \lt{L}$
    with $b\ll \alpha(\ell)$. Then there exists $a\in \jb{L}$ 
    with $a\ll \ell$ and $b\ll \alpha(a)$.  
  \end{enumerate}
  Let $\lt{L}$ be a a continuous lattice with join basis $\jb{L}$.
  \begin{enumerate}
    \setcounter{enumi}{1}
  \item \label{lem:aux-1-2} Let $\du{b}\in \lt{B}$ and
    $\ell\in\lt{L}$ with $\alpha(\ell) \not\sqsubseteq \du{b}$. Then
    there exists $a\in \jb{L}$ 
    with $a\ll \ell$ and
    $\alpha(a)\not\sqsubseteq \du{b}$.
  \end{enumerate}
\end{lemma}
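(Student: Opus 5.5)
Both parts rest on writing $\ell$ as a directed join of basis elements way-below it and pushing this join through $\alpha$, which preserves all suprema as a left adjoint. Since $\lt{L}$ is continuous with join basis $\jb{L}$, we have $\ell = \bigsqcup D_0$ with $D_0=\{a\in\jb{L}\mid a\ll \ell\}$. This set need not be directed, so I pass to the family $D$ of all finite joins of elements of $D_0$, including the empty join $\bot$. The set $D$ is nonempty and closed under binary joins, hence directed, and $\bigsqcup D=\ell$. Each element of $D$ is way-below $\ell$, because $\ll$ is preserved by finite joins ($x\ll z$, $y\ll z$ imply $x\sqcup y\ll z$) and $\bot\ll\ell$. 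Applying $\alpha$ gives $\alpha(\ell)=\bigsqcup_{d\in D}\alpha(d)$, and $\{\alpha(d)\mid d\in D\}$ is directed because $\alpha$ is monotone.

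For part~\ref{lem:aux-1-1}, assume $b\ll\alpha(\ell)=\bigsqcup_{d\in D}\alpha(d)$. By definition of way-below there is $d\in D$ with $b\sqsubseteq\alpha(d)$. We actually need $b\ll\alpha(d)$. The remark in the preliminaries requires continuity of $\lt{B}$, which is not assumed here. Instead I interpolate: I show directly that $b\ll\alpha(d)$ for some $d\in D$, using the directed set $D$ and the fact that $\alpha(\ell)$ is the directed join. I expect this interpolation to be the main obstacle.

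A cleaner route avoids interpolation in $\lt{B}$ by choosing the finite joins cleverly. Write $d=a_1\sqcup\dots\sqcup a_n$ with $a_i\in D_0$, so $\alpha(d)=\alpha(a_1)\sqcup\dots\sqcup\alpha(a_n)$. To obtain $b\ll\alpha(d)$ from $b\sqsubseteq\alpha(d)$, I would use a refined directed family in $\lt{L}$: the elements $e\in D$ for which some $d'\in D$ satisfies $d'\ll e$. Continuity of $\lt{L}$ together with interpolation in $\lt{L}$, which does hold, makes this family directed with join $\ell$. I then get $b\sqsubseteq\alpha(d')$ with $d'\ll e\in D$. This still only yields $\sqsubseteq$, so if this attempt fails I fall back on assuming the interpolation property for $b\ll\alpha(\ell)$; the paper's remark on continuous lattices suggests that $\lt{B}$ is implicitly continuous. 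With interpolation, $b\ll\alpha(d)=\bigsqcup_{i}\alpha(a_i)$. Since $b\in\jb{B}$ is irreducible and the join is finite, $b\ll\alpha(a_i)$ for some $i$. Then $a=a_i\in\jb{L}$ satisfies $a\ll\ell$ and $b\ll\alpha(a)$, as required. The case $d=\bot$ cannot occur, since it would give $b\ll\bot=\bigsqcup\emptyset$ and irreducibility would force an element of $\emptyset$.

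For part~\ref{lem:aux-1-2} we can work directly with $D_0$ and no directedness is needed. Suppose $\alpha(a)\sqsubseteq\du{b}$ for every $a\in D_0$. Then $\alpha(\ell)=\alpha(\bigsqcup D_0)=\bigsqcup_{a\in D_0}\alpha(a)\sqsubseteq\du{b}$, which is a contradiction. Hence some $a\in D_0$, that is, some $a\in\jb{L}$ with $a\ll\ell$, satisfies $\alpha(a)\not\sqsubseteq\du{b}$.
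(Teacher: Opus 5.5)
Your part~2 is the paper's argument. For part~1, your final argument is correct \emph{provided $\lt{B}$ is continuous}, and it takes a more careful route than the paper. You pass to the directed family of finite joins of $\{a\in\jb{L}\mid a\ll\ell\}$. You then use continuity of $\lt{B}$ to get $b\ll\alpha(a_1)\sqcup\dots\sqcup\alpha(a_n)$ for some $a_i\ll\ell$, and only then apply irreducibility to a \emph{finite} join. The paper instead applies irreducibility in one step to the possibly infinite join $\bigsqcup_{a\in\jb{L},\,a\ll\ell}\alpha(a)$. Its definition of irreducibility covers only finite $F$, so it does not license that step, and making the step rigorous needs exactly your reduction and hence the same hypothesis. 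So do not present continuity of $\lt{B}$ as a fallback that is ``implicitly'' assumed. Also delete the refined-family detour in $\lt{L}$; as you noticed, it can only ever produce $\sqsubseteq$ in $\lt{B}$. The hypothesis must be added to the statement, because part~1 is false without it. It costs nothing where the lemma feeds the primal way-below game, which already requires continuous $\lt{B}$. However, the primal half of the witness-existence proposition, which invokes part~1, needs it as well.

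Here is a counterexample. Let $\lt{L}=\{0<1<2<\dots<\omega\}$ with $\jb{L}=\lt{L}\setminus\{0\}$. Let $\lt{B}$ consist of $\bot, b, c_\infty, \top$ and $c_k, e_{i,j}$ ($i,j,k\ge 1$), ordered by the order generated by $\bot\sqsubseteq b\sqsubseteq c_1\sqsubseteq c_2\sqsubseteq\dots\sqsubseteq c_\infty\sqsubseteq\top$, $\bot\sqsubseteq e_{i,1}\sqsubseteq e_{i,2}\sqsubseteq\dots$, and $e_{i,j}\sqsubseteq c_i$. This is a complete lattice with $\bigsqcup_j e_{i,j}=c_i$, and $\jb{B}=\{b,\top\}\cup\{e_{i,j}\}$ is a join basis.

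Using the chain $(e_{m,j})_j$ for a suitably large block index $m$, no basis element is way-below any $b$, $e_{i,j}$ or $c_k$. So basis elements are way-below only $c_\infty$ or $\top$. A finite join equals $c_\infty$ (resp.\ $\top$) only if it contains that element, so every element of $\jb{B}$ is irreducible.

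On the other hand $b\ll c_\infty$. A directed set containing none of the $c_k$, $c_\infty$, $\top$ lies in $\{\bot,b\}$ or in a single block $\{\bot\}\cup\{e_{i,j}\mid j\ge 1\}$. Its join therefore lies below some $c_i$ and is not above $c_\infty$.

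Now $\alpha(0)=\bot$, $\alpha(n)=c_n$, $\alpha(\omega)=c_\infty$ preserves all joins, hence is a left adjoint, and $b\ll\alpha(\omega)$. But the $a\in\jb{L}$ with $a\ll\omega$ are exactly the $n\in\mathbb{N}$ with $n\ge 1$, and $b\not\ll c_n=\alpha(n)$.
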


\begin{proof}  
  \begin{enumerate}
  \item Since $\lt{L}$ is continuous and  $\alpha$ is join-preserving
    it holds that
    \[
      b\ll \alpha(\ell) =
      \alpha(\bigsqcup_{\substack{a\in \jb{L}\\
          a\ll \ell}} a) =
      \bigsqcup_{\substack{a\in \jb{L}\\
          a\ll \ell}} \alpha(a)
    \]
    Since $b$ is a way-below irreducible, there exists
    $a\in \jb{B}$ with $a\ll \ell$ and $b\ll \alpha(a)$.
  \item Due to the fact that $\lt{L}$ is continuous and $\alpha$ is
    join-preserving we have
    \[ \bigsqcup \{\alpha(a)\in \jb{L}\mid a\ll \ell\} =
      \alpha(\bigsqcup \{a\in \jb{L}\mid a\ll \ell\}) =
      \alpha(\ell) \not\sqsubseteq \du{b}. \] Hence there must be
    $a\ll \ell$ such that $\alpha(a) \not\sqsubseteq \du{b}$. \qedhere
  \end{enumerate}
\end{proof}
\end{toappendix}

\begin{propositionrep}
  Assume the setting of Definition~\ref{def:witness}. Let $\lt{L}$ be
  continuous with join basis~$\jb{L}$.
  
  Assume that $\lt{B}$ has a join basis $\jb{B}$ consisting only of
  way-below irreducibles. Given $b\in \jb{B}$, there exists a primal
  witness $\pr{a}\in \jb{L}$ for $b$ iff $b\ll \mu\be$.

  Given $\du{b}\in\lt{B}$, there exists a dual witness $a\in \jb{L}$ for
  $\du{b}$ iff $\mu\be\not\sqsubseteq \du{b}$. 
\end{propositionrep}
  
\begin{proof}
    Let $\pr{a}$ be a primal witness for $b$. Hence
    $b\ll \alpha(\pr{a}) \sqsubseteq \alpha(\mu\lo) = \mu\be$. Hence
    clearly $b\ll \mu\be$.
  
    Now assume that $b\ll\mu\be = \alpha(\mu\lo)$ where
    $b\in \jb{B}$. Then the statement follows from
    Lemma~\ref{lem:auxiliary-functions-1}(\ref{lem:aux-1-1}).
    
    \smallskip\hrule\smallskip
    
    Let $a$ be a dual witness for $\du{b}$, that is
    $a\sqsubseteq \mu\lo$ and $\alpha(a)\not\sqsubseteq \du{b}$. Since
    $a\sqsubseteq \mu\lo$, by monotonicity we have that
    $\alpha(a)\sqsubseteq \alpha(\mu \lo) = \mu \be$. Assume that
    $\mu \be\sqsubseteq \du{b}$. This implies $\alpha(a)\sqsubseteq \du{b}$, a
    contradiction. 
  
    Now assume that $\mu \be \not\sqsubseteq \du{b}$. Furthermore $\mu\be =
    \alpha(\mu\lo)$. Then the statement follows from
    Lemma~\ref{lem:auxiliary-functions-1}(\ref{lem:aux-1-2}).
    
\end{proof}

Note that for primal witnesses we have to assume a basis of
  way-below irreducibles, which is a restriction. Because of the
above proposition, we can focus our attention on the chosen
(join or meet) basis and will typically assume that the witness
$a\in\lt{L}$ is a basis element. The same holds for
$b\in\lt{B}$.

Existence of witnesses is rather easy to prove, however the
construction of such witnesses by a recursive process is more
challenging and will be treated in the next sections.
  
\begin{example}[Bases \& Witnesses]
  \label{ex:running-basis-witness}
  We continue our running example and instantiate these concepts to
  the case of unlabelled transition systems and bisimilarity.

  For a join basis of $\lt{L} = \mathcal{P}(\mathcal{P}(X))$ we
  consider singleton sets $\{Y\}$ where $Y\subseteq X$.

  For $\lt{B} = \mathcal{P}(X\times X)$ (ordered by $\subseteq$) we
  consider a join basis consisting of all singletons
  $\{(x_1,x_2)\}\subseteq X\times X$ (which are also irreducibles). As
  a meet basis we use all co-singletons
  $\overline{\{(x_1,x_2)\}} := (X\times X)\backslash \{(x_1,x_2)\}$
  for $x_1,x_2\in X$. 

  If we reverse the order to $\supseteq$, the join basis becomes a
  meet basis and vice versa and all join basis elements are
  irreducible. We will now continue working wrt.\ the inverse
  inclusion.

  It holds that
  $\overline{\{(x_1,x_2)\}} \ll \mu\be \iff \mu \be \subseteq
  \overline{\{(x_1,x_2)\}} \iff (x_1,x_2)\not\in\mu\be \iff
  x_1\not\sim x_2$.
  A primal witness is then a predicate $\pr{a} =
  \{X'\}\subseteq\mu\lo$ (with $X'\subseteq X$)
  that is obtained by evaluating a logical formula, for which
  $\overline{\{(x_1,x_2)\}} \ll \alpha(\{Y\})\iff
  \alpha(\{Y\})\subseteq \overline{\{(x_1,x_2)\}} \iff
  \lnot(x_1\,\alpha(\{Y\})\,x_2) \iff (x_1\in Y\Leftrightarrow
  x_2\not\in Y)$. And that is exactly the notion of a distinguishing
  formula.

  In the dual case
  $\mu\be\not\sqsubseteq \{(x_1,x_2)\} \iff \mu\be\not\supseteq
  \{(x_1,x_2)\} \iff (x_1,x_2)\not\in \mu\be\iff x_1\not\sim x_2$. As
  in the primal case a witness is a predicate obtained from a logical
  formula that distinguishes $x_1,x_2$. 
\end{example}

We will now define the notion of (co-)degree, which is a ordinal that
gives the least number of iterations needed to cover (or reach) a
certain element.

\begin{definition}
  \label{def:co-degree}
  Let $\lo\colon\lt{L}\to\lt{L}$.  The \emph{degree} and
  \emph{co-degree} of an element $a\in\lt{L}$ (wrt.\ $\lo$) are
  defined as follows (where we assume that $\min\emptyset$ is
  undefined):
  \[ \degr_{\lo}(a) = \min\{i\in\mathsf{Ord}\mid a \ll 
    \lo^{i}(\bot) \} \qquad \cdegr_{\lo}(a) =
    \min\{i\in\mathsf{Ord}\mid \lo ^{i}(\bot) \not\sqsubseteq a\} \]
\end{definition}

Note that the degree (co-degree) of
an element $a$ is defined whenever $a\ll \mu\lo$
($\mu\lo\not\sqsubseteq a$). It holds that $\degr(\bot) = 0$,
otherwise the (co-)degree is always a successor ordinal. We will
often omit the subscript and simply write $\degr$ or $\cdegr$ if the
function ($\lo$ or $\be$) is clear from the context.

\begin{toappendix}
  \begin{lemma}
    \label{lem:properties-deg}
    It holds that
    \begin{enumerate}
    \item \label{lem:properties-deg-1} $\degr$ is monotone, i.e.,
      $a\sqsubseteq a'$ implies $\degr(a)\le \degr(a')$.
    \item \label{lem:properties-cdeg} $\cdegr$ is monotone, i.e.,
      $a\sqsubseteq a'$ implies $\cdegr(a)\le \cdegr(a')$.
    \item \label{lem:properties-deg-2} if $a'\ll \lo(a)$, then
      $\degr(a')\leq \degr(a)+1$.
    \item \label{lem:properties-deg-3}
      $\degr(a\sqcup a') = \max\{\degr(a),\degr(a')\}$.
    \end{enumerate}
  \end{lemma}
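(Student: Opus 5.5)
The four items all follow from the definitions of $\degr$ and $\cdegr$ in Definition~\ref{def:co-degree}, together with two facts: the Kleene iterates $\lo^i(\bot)$ form an increasing chain, and $\ll$ is compatible with $\sqsubseteq$ on both sides. Concretely, $x\sqsubseteq a\ll y\sqsubseteq z$ implies $x\ll z$. Throughout I read each inequality as a statement about the relevant (co-)degrees whenever they are defined. In particular, in item (1) the assumption is that $a'\ll\mu\lo$, which forces $a\ll\mu\lo$ by the compatibility above.

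For item (1), let $i=\degr(a')$, so $a'\ll\lo^i(\bot)$. Since $a\sqsubseteq a'$, we get $a\ll\lo^i(\bot)$. Hence $i$ belongs to the set whose minimum defines $\degr(a)$, and $\degr(a)\le i$.

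For item (2), let $i=\cdegr(a)$, so $\lo^i(\bot)\not\sqsubseteq a$. If we had $\lo^i(\bot)\sqsubseteq a'$, then $a\sqsubseteq a'$ would not rule this out; but for every $j<\cdegr(a')$ we have $\lo^j(\bot)\sqsubseteq a'$. Taking the contrapositive is cleaner: from $\lo^j(\bot)\sqsubseteq a\sqsubseteq a'$ for all $j<i$ we only learn the reverse inequality. So the statement as written needs $\cdegr$ to be antitone, or else the intended order is reversed. I would check this first and prove whichever direction holds. The valid one is $a\sqsubseteq a'\Rightarrow\cdegr(a')\ge\cdegr(a)$, argued as follows. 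If $\lo^j(\bot)\not\sqsubseteq a'$, then $\lo^j(\bot)\not\sqsubseteq a$, since otherwise $\lo^j(\bot)\sqsubseteq a\sqsubseteq a'$. So the set defining $\cdegr(a')$ is contained in the set defining $\cdegr(a)$, and the minimum of the smaller set is at least the minimum of the larger one. This gives exactly $\cdegr(a)\le\cdegr(a')$, as stated.

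For item (3), let $i=\degr(a)$, so $a\ll\lo^i(\bot)$. Then $a\sqsubseteq\lo^i(\bot)$, and by monotonicity $\lo(a)\sqsubseteq\lo^{i+1}(\bot)$. Combined with $a'\ll\lo(a)$ this yields $a'\ll\lo^{i+1}(\bot)$, so $\degr(a')\le i+1$.

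For item (4), the inequality $\ge$ follows from item (1), since $a,a'\sqsubseteq a\sqcup a'$. For $\le$, let $k=\max\{\degr(a),\degr(a')\}$. Because the iterates are increasing, both $a\ll\lo^k(\bot)$ and $a'\ll\lo^k(\bot)$ hold. The join of two elements way-below $y$ is again way-below $y$, which is a standard fact. Hence $a\sqcup a'\ll\lo^k(\bot)$.

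The only slightly nontrivial ingredient is this closure of $\ll$ under binary joins, and I would verify it directly. Given a directed $D$ with $\lo^k(\bot)\sqsubseteq\bigsqcup D$, pick $d_1\in D$ with $a\sqsubseteq d_1$ and $d_2\in D$ with $a'\sqsubseteq d_2$. Directedness provides an upper bound $d\in D$ of $d_1$ and $d_2$, and then $a\sqcup a'\sqsubseteq d$.
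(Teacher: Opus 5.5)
Your proof is correct and follows the paper's argument item by item: monotonicity of $\degr$ from $a\sqsubseteq a'\ll\lo^{\degr(a')}(\bot)$; the contrapositive argument for item (2), where $\lo^j(\bot)\not\sqsubseteq a'$ forces $\lo^j(\bot)\not\sqsubseteq a$; pushing $a\sqsubseteq\lo^{k}(\bot)$ through $\lo$ for item (3); and closure of $\ll$ under binary joins for item (4). Your detour in item (2) suggesting that $\cdegr$ might need to be antitone is unfounded, since the argument you end up giving proves $\cdegr(a)\le\cdegr(a')$ exactly as stated, as the paper does.
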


\begin{proof}
  \mbox{} Note that $x\sqsubseteq y\ll z$ implies $x\ll z$; and that
  $x\ll z$ and $y\ll z$ implies $x\sqcup y\ll z$
  \cite{CompContLattices}.
  \begin{enumerate}
  \item Follows directly from the definition: since
    $a\sqsubseteq a'\ll \lo^{\degr(a')}(\bot)$,
    $a\ll \lo^{\degr(a')}(\bot)$ and therefore
    $\degr(a)\leq \degr(a')$.
  \item Let $a\sqsubseteq a'$ and let $\cdegr(a) = k$ and
    $\cdegr(a') = k'$.  This implies that
    $\lo^{k'}(\bot) \not\sqsubseteq a'$. Hence
    $\lo^{k'}(\bot) \not\sqsubseteq a$, since otherwise
    $\lo^{k'}(\bot) \sqsubseteq a \sqsubseteq a'$, a
    contradiction. Hence $\cdegr(a) \le k' = \cdegr(a')$.
  \item Let $\degr(a) = k$ and we have $a\ll \lo^k(\bot)$ and
    therefore $a\sqsubseteq\lo^k(\bot)$.  Hence by monotonicity of
    $\lo$ it holds that $\lo(a)\sqsubseteq \lo^{k+1}(\bot)$.  Since
    $a'\ll\lo(a)$, we get
    $a'\ll\lo(a)\sqsubseteq\lo^{k+1}(\bot)$. Hence,
    $a\ll\lo^{k+1}(\bot)$, which means $\degr(a') \le k+1=\degr(a)+1$.
  \item We set $k=\degr(a)$, $k'=\degr(a')$ and assume that $k\le k'$.
    We show both inequalities:
    \begin{itemize}
    \item $\degr(a\sqcup a') \le k'$: From
      $a\ll \lo^k(\bot)\sqsubseteq \lo^{k'}(\bot)$, we get
      $a\ll \lo^{k'}(\bot)$. Together with $a'\ll \lo^{k'}(\bot)$,
      this implies $a\sqcup a' \ll \lo^{k'}(\bot)$ and therefore
      $\degr(a\sqcup a')\leq k'$.
    \item $\degr(a\sqcup a') \ge k'$: from $a\sqsubseteq a\sqcup a' $
      and with (\ref{lem:properties-deg-1}), it follows
      $\degr(a')=k'\leq \degr(a\sqcup a')$. \qedhere
    \end{itemize}
  \end{enumerate}
\end{proof}

\begin{lemma}
  \label{lem:witness-degr}
  Let $\lo\colon\lt{L}\to\lt{L}$, $\be\colon \lt{B}\to \lt{B}$ with
  $\alpha\circ\lo = \be\circ\operatorname{\alpha}$. Let
  $\pr{a}\in\lt{L}$ be a primal witness for $b\in \lt{B}$. Then
  $ \degr(b)\le \degr(\pr{a})$.
    
  Analogously, let $a\in \jb{L}$ be a dual witness for
  $b\in \lt{B}$, i.e., $\alpha(a)\not\sqsubseteq b$.  Then
  $\cdegr(b)\le \degr(a)$.
\end{lemma}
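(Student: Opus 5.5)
The plan is to push the defining property of the degree of the witness through $\alpha$, using that $\alpha$ commutes with all Kleene stages, i.e.\ $\alpha(\lo^i(\bot)) = \be^i(\bot)$ for every $i\in\mathsf{Ord}$ (this follows from $\alpha\circ\lo=\be\circ\alpha$, as recalled in the preliminaries). Throughout, $\degr(b)$ and $\cdegr(b)$ are taken wrt.\ $\be$, while $\degr(a)$ is taken wrt.\ $\lo$.

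\emph{Primal case.} Let $k=\degr_{\lo}(\pr{a})$, which is defined because $\pr{a}\ll\mu\lo$. Then $\pr{a}\ll\lo^k(\bot)$, hence $\pr{a}\sqsubseteq\lo^k(\bot)$. By monotonicity of $\alpha$,
\[
\alpha(\pr{a})\sqsubseteq\alpha(\lo^k(\bot))=\be^k(\bot).
\]
Since $b\ll\alpha(\pr{a})$ and $x\ll y\sqsubseteq z$ implies $x\ll z$, we get $b\ll\be^k(\bot)$. By the definition of $\degr_{\be}$ as a minimum, this gives $\degr(b)\le k$.

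\emph{Dual case.} Again let $k=\degr_{\lo}(a)$, so that $a\sqsubseteq\lo^k(\bot)$ and therefore $\alpha(a)\sqsubseteq\be^k(\bot)$. If $\be^k(\bot)\sqsubseteq b$ held, we would get $\alpha(a)\sqsubseteq b$, contradicting the witness property. So $\be^k(\bot)\not\sqsubseteq b$, and by minimality $\cdegr(b)\le k$.

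There is no real obstacle. The only point to check is that the stage-preservation $\alpha(\lo^i(\bot))=\be^i(\bot)$ holds at limit ordinals as well; this is true because $\alpha$, being a left adjoint, preserves arbitrary joins. The argument is also robust: it uses only that $a\sqsubseteq\lo^k(\bot)$, not the full way-below relation.
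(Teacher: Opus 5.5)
Your proposal is correct and is essentially the paper's proof: in both cases you set $k=\degr(a)$, use $a\sqsubseteq\lo^k(\bot)$ and $\alpha(\lo^k(\bot))=\be^k(\bot)$, and obtain $b\ll\be^k(\bot)$ in the primal case and $\be^k(\bot)\not\sqsubseteq b$ in the dual case, then conclude by minimality. Your remark that stage preservation also holds at limit ordinals, because $\alpha$ preserves arbitrary joins, is a correct justification of a fact the paper simply takes from the preliminaries.
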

  
\begin{proof}
  First we assume $\pr{a}$ is a primal witness for $b$ and prove
  $\degr(b)\leq\degr(\pr{a})$.  We have $b\ll \alpha(\pr{a})$ and
  $\pr{a}\ll \mu\lo$. Let $i=\degr(\pr{a})$, i.e.,
  $a\ll \lo^i(\bot)$. This implies
  \[ b\ll\alpha(\pr{a}) \sqsubseteq \alpha(\lo^i(\bot)) =
    \be^i(\alpha(\bot)) = \be^i(\bot). \] Hence
  $\degr(b)\le i = \degr(\pr{a})$.
  
  Regarding the dual witness, let $\degr(a)=i$. By definition,
  $a\sqsubseteq \lo^i(\bot)$.  Applying $\alpha$ yields
  $\alpha(a)\sqsubseteq \alpha(\lo^i(\bot)) = \be^i(\bot)$.
  Since $a$ is a dual witness for $b$, $\alpha(a)\not\sqsubseteq b$.
  From $\alpha(a)\sqsubseteq \be^i(\bot)$ we can infer that
  $\be^i(\bot)\not\sqsubseteq b$.  Therefore,
  $\min\{i\in \mathsf{Ord}\mid \be^{i}(\bot) \not\sqsubseteq b\} \leq
  i$ and hence $\cdegr(b) \leq \degr(a)$.
\end{proof}
\end{toappendix}

\section{Games}

In this section we will assume a monotone function on a complete
lattice and study fixpoint games \cite{POPL19FPGames} that
characterize the least fixpoint of such functions. Our aim is in
particular to single out cases where finite strategies are sufficient
and eventually connect game strategies with witnesses.  In the
following we spell out the games for a function
$\be\colon \lt{B}\to\lt{B}$, but we will also play the primal way
below game for the ``other'' function $\lo\colon \lt{L}\to \lt{L}$.

We first introduce the game as it is presented in
\cite{POPL19FPGames}, in the following also called the primal game.
Let $\lt{B}$ be a continuous lattice with a join basis $\jb{B}$ such
that $\bot\not\in \jb{B}$. Let $\be\colon \lt{B}\to \lt{B}$ be
monotone. The game starts with $b\in\jb{B}$ and players $\exists$ and
$\forall$ play according to the following rules:

\begin{mdframed}
  \it
  \vspace{-0.2cm}
  \begin{center}
    \begin{tabular}{ c c l }
      \normalsize
      Position & \normalsize Player & \normalsize Moves \\
      \hline
      \normalsize $b\in \jb{B}$ & \normalsize $\exists$ & \normalsize $d\in\lt{B}$, such that
      $b \sqsubseteq \be(d)$ \\
      \normalsize $d\in \lt{B}$ & \normalsize $\forall$ & \normalsize $b'\in \jb{B}$ such that
      $b' \ll d$
    \end{tabular}
  \end{center}
  \vspace{-0.2cm}
\end{mdframed}

If a player cannot move, the opponent wins. Infinite games are won by
$\forall$.

It holds that $b\sqsubseteq \mu\be$ iff the $\exists$ player has a
winning strategy from $b$ \cite{POPL19FPGames}. The intuitive
  reason for requiring the way-below order for the answer of $\forall$
  is that -- whenever $b\sqsubseteq \mu\be$ -- $\exists$ can ensure
  that the game positions descend the chain of ordinals and $\forall$
  eventually runs out of answering moves.
  
Furthermore $b\sqsubseteq \nu\be$ iff $\exists$ has a strategy in the
same game with the only modification that $\exists$ wins infinite
games.

\subsection{Primal Way-Below Game}
\label{sec:primal-way-below-game}
  
There is also a ``way-below'' version of the primal game where the aim
of $\exists$ is to show a strict lower bound, i.e.\ prove that
$b\ll \mu\be$. To our knowledge this version of the game is original.

\begin{definition}
    \label{def:primal-way-below-game}
    Let $\lt{B}$ be a continuous lattice with a join basis
    $\jb{B}$ such that $\bot\not\in \jb{B}$. Let
    $\be\colon \lt{B}\to \lt{B}$ be monotone. We define a game
    between players $\exists$ and $\forall$ that play according to the
    following rules starting with $b\in\jb{B}$:
  
    \begin{mdframed}
      \vspace{-0.2cm}
      \begin{center}
        \begin{tabular}{ c c l }
          \normalsize
          Position & \normalsize Player & \normalsize Moves \\
          \hline
          \normalsize $b\in \jb{B}$ & \normalsize $\exists$ &
          \normalsize $d\in\lt{B}$, such that
          $b \ll \be(d)$ \\
          \normalsize $d\in \lt{B}$ & \normalsize $\forall$ &
          \normalsize $b'\in \jb{B}$ such that
          $b' \ll d$
        \end{tabular}
      \end{center}
      \vspace{-0.2cm}
    \end{mdframed}
  
    If a player cannot move, the opponent wins. Infinite games are won
    by $\forall$.
  \end{definition}

In the context of the primal way-below game, we will also call the
$\exists$ player the attacker and the $\forall$ player the defender. 

Note that since in parity games memoryless winning strategies are
always enough \cite{bw:mu-calculus-modcheck}, it suffices to consider
positional winning strategies. We can also show that under some
circumstances formulas are ``finitely constructed'', i.e., strategies
are finitary.

\begin{propositionrep}
    \label{prop:winning-strategy-primal-way-below}
    Let $b\in \jb{B}$. Player $\exists$ has a winning strategy in
    the game in Definition~\ref{def:primal-way-below-game} iff
    $b\ll \mu\be$. 

    Whenever $\be$ is a continuous function, then $\exists$ has a
    finitary winning strategy. Finitary means that $d = \bigsqcup F$
    where $F$ is a finite subset of $\jb{B}$ and
    $\degr(b') < \degr(b)$ for each $b'\in F$. In particular $\exists$
    can win in $\degr(b)$ steps.
\end{propositionrep}
  
\begin{proof}
  Assume that $b\ll\mu\be = \bigsqcup_{i\in\mathsf{Ord}}
  \be^i(\bot)$. The latter is a directed join, hence,
  $b\ll \be^i(\bot)$ for some $i\in \mathsf{Ord}$ (cf. \cite[Thm
  I-1.9]{CompContLattices}).  We take the least such ordinal
  (i.e. $i=\degr(b)$) and observe that it must be a successor ordinal
  $i=j+1$. Now $\exists$ plays
  $\strat_{p,\be}^\exists(b) = \be^j(\bot)$, which means that
  $\forall$ must choose $b'\ll \be^j(\bot)$.
  Hence $\degr(b')\leq j < i = \degr(b)$.  The game continues and
  terminates after finitely many steps when the degree reaches $0$ and
  $\forall$ must choose $b'\in \jb{B}$ with
  $b'\ll \be^0(\bot) = \bot$. This is not possible since
  $\bot\notin \jb{B}$.

  Since every round is played from a position of lower degree than the
  previous round, and the game ends for a position of $\degr$ zero,
  the player $\exists$ wins the game from position $b$. It takes at
  most $\degr(b)$ steps if the ordinal $\degr(b)$ is a natural number.

  \smallskip\hrule\smallskip
  
  Now assume that $b\not\ll\mu\be$. We show that there is a winning
  strategy for $\forall$. Let $\exists$ play some $d$ with
  $b\ll\be(d)$. Then there exists $b'\in \jb{B}$, $b'\not\ll\mu\be$
  with $b'\ll d$. If that were not the case, it holds that all
  $b'\ll d$ satisfy $b'\ll\mu \be$, hence by the fact that $\lt{B}$ is
  continuous
  \[ d = \bigsqcup_{b'\ll d, b'\in \jb{B}} b' \sqsubseteq
    \bigsqcup_{b'\ll \mu\be, b'\in \jb{B}} b' = \mu\be \] and this
  implies by monotonicity
  \[ b\ll \be(d) \sqsubseteq \be(\mu \be) = \mu \be, \] a
  contradiction. The player $\forall$ plays this $b'$ and can keep the
  game going forever.

  \smallskip\hrule\smallskip
  
  We will argue that it is enough if only finitary strategies are
  played, i.e., $d$ is of the form $\bigsqcup F$ where $F$ is finite
  subset of $\jb{B}$ and $\degr(b') < \degr(b)$ for each $b'\in
  F$. 

  We show that whenever $d = \strat_{p,\be}^\exists(b)$
  (i.e., $b\ll \be(d)$), there exists $F\subseteq \jb{B}$
  finite, such that $\bigsqcup F\sqsubseteq d$ and
  $b\ll \be(\bigsqcup F)$, i.e., playing $\bigsqcup F$ is also a
  winning strategy.

  It holds that $d\in \be^{-1}(\mathop{\twoheaduparrow} b)$. We represent $d$
  as a directed join of finite joins, that is
  \[ d = \bigsqcup _{\substack{F\subseteq \{b'\mid b'\ll d\}\\
        F\subseteq \jb{B}\text{\scriptsize\ finite}}} \bigsqcup F \]
  Since $\be$ is continuous, the set
  $\be^{-1}(\mathop{\twoheaduparrow} b)$ is open.  Hence, there must
  be a finite set $F$ such that $\bigsqcup F$ is contained in
  $\be^{-1}(\mathop{\twoheaduparrow} b)$ (hence $b\ll \bigsqcup F$)
  and below $d$ ($\bigsqcup F \sqsubseteq d$). If playing $d$ is a
  valid strategy, playing $\bigsqcup F$ is also a valid strategy.
\end{proof}

Since it is guaranteed that the degrees decrease the game will
terminate eventually and $\exists$ wins. Such a finitary winning
strategy for $\exists$ (the attacker) is denoted by
$\strat_{p,\be}^\exists$ and assigns a suitable move
$F\subseteq \jb{B}$ to $b\in \jb{B}$. Then $\exists$ plays
$\bigsqcup \strat_{p,\be}^\exists(b)$. $\strat_{p,\be}^\exists(b)$ is
undefined if there is no winning strategy from $b$.

\begin{example}[Strategy for the Primal Game]
  \label{ex:running-primal-way-below-game}
  We spell out the primal game for our running example. The join basis
  for $(\mathcal{P}(X\times X),\supseteq)$ are the co-singletons
  (cf. Example~\ref{ex:running-basis-witness}). Now, when the initial
  situation is $\overline{\{(x_1,x_2)\}}$, the game proceeds as
  follows:
  \begin{itemize}
  \item $\exists$ (attacker) plays $R\subseteq X\times X$ such that
    $\be(R)\subseteq \overline{\{(x_1,x_2)\}}$ (i.e.,
    $(x_1,x_2) \not\in \be(R)$).

    More concretely $\exists$ plays $R$ satisfying: there exists
    $y_1\in X$ with $x_1\to y_1$ such that for all $y_2\in X$ with
    $x_2\to y_2$ it holds that $(y_1,y_2)\not\in R$ \emph{or} vice
    versa.
  \item $\forall$ (defender) chooses
    $R\subseteq \overline{\{(y_1,y_2)\}}$ (i.e., $(y_1,y_2)\not\in R$)
    and wins infinite games. 
  \end{itemize}
  If $x_1\not\sim x_2$, a possible winning strategy of $\exists$
  (attacker) is to play
  \[ \strat_{p,\be}^\exists(\overline{\{(x_1,x_2)\}}) =
    \overline{\{y_1\}\times \mathit{succ}(x_2)} =
    \bigcap\nolimits_{y\in\mathit{succ}(x_2)}
    \overline{\{(y_1,y)\}} \] if $x_1\stackrel{a}{\to} y_1$ is a
  winning move in the traditional game (or vice versa). By deriving
  the strategy from the fixpoint iteration it is always possible to
  choose $y_1$ such that for all $y\in \mathit{succ}(x_2)$,
  $\overline{\{(y_1,y)\}}$ has a smaller degree than
  $\overline{\{(x_1,x_2)\}}$. Then $\forall$ must pick $(y_1,y)$,
  which is equivalent to making an answering move in the traditional
  game.  Hence we obtain a game that is close to the game in
  \cite{stirling99} where a defender mimics the moves made by the
  attacker.
\end{example}

In Section~\ref{sec:dist-form} we considered (primal) witnesses as
monolithic objects. But in fact, in order to use witnesses to derive
strategies, we have to be able to deconstruct them into subformulas.
This is exactly what this game does for the logic function $\lo$ (as
defined in Example~\ref{ex:running-intro}). Given $a\in \jb{L}$ with
$a\ll \mu \lo$ (a basis element, representing to a formula of the
logic), $\exists$ is obliged to play a finite $A\subseteq \jb{L}$ with
$a\sqsubseteq \lo(\bigsqcup A)$. That is, $\exists$ exhibits the
subformulas, from which $a$ can be constructed by applying $\lo$.
Then $\forall$ can pick $a'\ll \bigsqcup A$ (in the case of powerset
lattices and a basis of singletons this amounts to choosing $a'\in A$)
and continue the game from there, asking $\exists$ to spell out why
$a'$ is also a logical formula.

If the lattice $\lt{B}$ is not continuous, the game can fail, in
particular $\exists$ may be able to win although $b\not\ll\mu\be$ (see
Example~\ref{ex:inst-co-cont} in the appendix). 
It could however be
the case that $\lt{B}$ is co-continuous and then we can use the game
presented next.

\begin{toappendix}
  \begin{example}\label{ex:inst-co-cont} 
    Let
    $\lt{B}=(\mathbb{N}_0\cup\{a, \omega, \omega+1\}, \sqsubseteq)$ be
    a lattice where $n\sqsubseteq \omega$ for all $n\in\mathbb{N}_0$,
    $0\sqsubseteq a\sqsubseteq\omega$, $\omega\sqsubseteq\omega+1$,
    and for any two elements in $\mathbb{N}_0$, $\sqsubseteq$
    coincides with $\le$.  Let $\jb{B}= \lt{B}\setminus\{0\}$.

    Then, $a\not\ll a$, since $a\sqsubseteq\bigsqcup \mathbb{N}_0$,
    however, there is no $n\in \mathbb{N}_0$ such that
    $a\sqsubseteq n$.  And therefore, $\lt{B}$ is not continuous,
    since $a$ does not arise as the join of elements that are
    way-below it.
  However, $\lt{B}$ is co-continuous. 
  We consider 
  the monotone and continuous function $\be:\lt{B}\rightarrow\lt{B}$,
  where $\be: 0\mapsto 0$, and $x\mapsto \omega+1$ otherwise.
  Since $a\not \ll\mu \be= 0$, $a$ meets the winning condition of
  player $\forall$ in the primal way-below game. However, if
  $\exists$ plays $a\ll a$, $\forall$ must answer with $b'$ such that
  $b'\ll a$.  The only element in $\lt{B}$ for which this holds is
  zero, however $0\not\in \jb{B}$. Therefore, player $\forall$ has
  no valid move and player $\exists$ wins although he should not.

  This example illustrates that the precondition of $\lt{B}$ being a
  continuous lattice is important in order to guarantee that the
  winning condition holds. However, if co-continuity holds, the game
  introduced in Section~\ref{sec:dual-game} will provide a winning
  condition. 
\end{example}
\end{toappendix}

\subsection{Dual Game}
\label{sec:dual-game}

We adjust the game in \cite{POPL19FPGames} for greatest fixpoints to
the dual setting by flipping the order and obtain another game
characterizing the least fixpoint. This gives us an alternative
perspective on games and witness generation.

\begin{definition}
  \label{def:dual-game}
  We consider the co-continuous lattice $\lt{B}$ with a meet basis
  $\mb{B}$ such that $\top\not\in \mb{B}$ and a monotone function
  $\be\colon\lt{B}\to \lt{B}$.  The dual game on
  $\be\colon\lt{B}\to\lt{B}$ follows the following rules.
  \begin{mdframed}
    \vspace{-0.2cm}
    \begin{center}
      \begin{tabular}{ c c l }
        \normalsize
        Position & \normalsize Player & \normalsize Moves \\ \hline
        \mystrut
        \normalsize $\du{b}\in \mb{B}$ & \normalsize $\exists$ &
        \normalsize $\du{d}\in\lt{B}$, such that
        $\be(\du{d}) \sqsubseteq \du{b}$ \\
        \normalsize $\du{d}\in \lt{B}$ & \normalsize $\forall$ &
        \normalsize $\du{b'}\in \mb{B}$ such that
        $\du{d} \wal \du{b'}$
      \end{tabular}
    \end{center}
    \vspace{-0.2cm}
  \end{mdframed}
  If a player cannot move, the opponent wins. Infinite games are won
  by $\exists$.
\end{definition}

Here $\exists$ takes the role of the defender, and $\forall$ the role
of the attacker. It holds that $\mu\be \sqsubseteq \du{b}$ iff
$\exists$ has a winning strategy when starting from $\du{b}$.

We show that $\forall$ has a finitary winning strategy that is in some
sense independent on the move of $\exists$. For this we require that
$\be$ is co-proper wrt.\ to the basis.

\begin{propositionrep}
  \label{prop:finite-winning-strategy-dual}
  Assume that $\lt{B}$ is co-continuous and that $\be$ is 
  co-proper wrt.\ $\mb{B}$.

  Fix $\du{b}\in \mb{B}$ with $\mu\be\not\sqsubseteq \du{b}$. Then
  there exists a finite set $F\subseteq \mb{B}$ such that -- for every
  move $\du{d}$ of $\exists$ -- $\forall$ can always choose some
  $\du{b'} \in F$ to win the game. In particular $\du{b'}\wag \du{d}$
  and $\cdegr(\du{b'}) < \cdegr(\du{b})$ for all $\du{b'}\in F$.  We
  denote $F$ by $\strat_{d,\be}^\forall(\du{b})$. Note that $\forall$
  can win in $\cdegr(\du{b})$ steps.
\end{propositionrep}

\begin{proof}
  Let $k=\cdegr(\du{b})$, which is a successor ordinal. We can infer
  that $\be^k(\bot)\not\sqsubseteq \du{b}$ and hence
  $\be^{k-1}(\bot)\not\in \be^{-1}(\mathop{\downarrow} \du{b})$.

  Now fix $\du{d}\in \be^{-1}(\mathop{\downarrow} \du{b})$. We know
  that $\be^{k-1}(\bot)\not\sqsubseteq \du{d}$ (since
  $\be^{-1}(\mathop{\downarrow} \du{b})$ is downward-closed and we
  could otherwise infer that
  $\be^{k-1}(\bot)\in \be^{-1}(\mathop{\downarrow} \du{b})$, a
  contradiction). Hence $\cdegr(\du{d})\le k-1$.

  Since $\lt{B}$ is co-continuous, 
  we can write $\du{d}$ as
  \[ \du{d} = \bigsqcap \{ \du{b'} \mid \du{b'}\in \mb{B},
    \du{b'}\wag \du{d} \}. \] 
    Since
  $\be^{k-1}(\bot)\not\sqsubseteq \du{d}$ there must be an element
  $\du{b'}\in \mb{B}$, $\du{b'}\wag \du{d}$ with
  $\be^{k-1}(\bot) \not\sqsubseteq \du{b'}$. Otherwise
  \[ \du{d} = \bigsqcap \{ \du{b'} \mid \du{b'}\in \mb{B},
    \du{b'}\wag \du{d} \} \sqsupseteq \be^{k-1}(\bot), \]
  a contradiction.
  This implies that $\cdegr(\du{b'}) \le k-1$.

  For each $\du{d}\in \be^{-1}(\mathop{\downarrow} \du{b})$ choose such a
  $\du{b'}$ and let $\mathcal{B}$ be the set of these elements. We
  observe that
  \[ \be^{-1}(\mathop{\downarrow} \du{b}) \subseteq
    \bigcup_{\du{b'}\in\mathcal{B}} \mathop{\wadownarrow} \du{b'}. \] Every
  set of the form $\mathop{\wadownarrow} \du{b'}$ is co-open. By assumption
  $\be^{-1}(\mathop{\downarrow} \du{b})$ is co-compact and hence every co-open
  cover of $\be^{-1}(\mathop{\downarrow} \du{b})$ contains a finite subcover.
  In other words there exists a finite subset $F\subseteq \mathcal{B}$
  such that
  \[ \be^{-1}(\mathop{\downarrow} \du{b}) \subseteq \bigcup_{\du{b'}\in F}
    \mathop{\wadownarrow} \du{b'} . \] So whenever $\exists$ plays $\du{d}$
  with $\be(\du{d})\sqsubseteq \du{b}$, there exists $\du{b'}\in F$
  such that $\du{b'}\wag \du{d}$ and
  $\cdegr(\du{b'}) = k-1 < k = \cdegr(\du{b})$. Then $\forall$ can
  choose this element $\du{b'}$ and wins after finitely many steps.
\end{proof}

\begin{example}[Strategy for the Dual Game]
  \label{ex:running-dual-game}
  We consider again our running example, i.e., the case of
  bisimilarity. We take the function $\be$ from
  Example~\ref{ex:running-intro} and the lattice
  $(\mathcal{P}(X\times X),\supseteq)$.
  The dual game on $\be$ corresponds to a coupling game
  \cite{b:coalgebraic-logic-games}. Given a basis element
  $\du{b} = \{(x_1,x_2)\}$, $\exists$ produces a ``coupling''
  $R\subseteq X\times X$ such that $(x_1,x_2)\in \be(R)$, which means
  every successor of $x_1$ must be paired with some successor of $x_2$
  and vice versa. Then $\forall$ picks one such pair in $R$, claims
  that it is not bisimilar and the game continues. Note whenever
  $x_1\not\sim x_2$ $\forall$ can always precompute a finitary winning
  strategy
  $\strat_{d,\be}^\forall(\du{b}) = \{\{(y_1,y)\} \mid y\in
  \mathit{succ}(x_2)\}$ 
  whenever we choose $y_1$ as in
  Example~\ref{ex:running-primal-way-below-game} as some state that
  has no bisimilar partner in $\mathit{succ}(x_2)$ (or vice versa).
\end{example}

\section{Transforming Witnesses and Winning Strategies} 
\label{sec:transformation}

\subsection{Auxiliary Functions}
\label{sec:auxiliary-functions}

Before we start to transform witnesses into winning strategies for the
attacker and vice versa, we first define some auxiliary
functions.

\begin{toappendix}
  \begin{lemma}
    \label{lem:auxiliary-functions-2}
    Let $\lt{L}$ be a continuous lattice and $\lt{B}$ a co-continuous
    lattice with meet basis $\mb{B}$.

    Let $\du{e}, \du{d}\in \lt{B}$ with
    $\du{e} \not\sqsubseteq \du{d}$. Then there exists
    $\du{b'}\in \mb{B}$ such that $\du{b'}\wag \du{d}$ and
    $\du{e}\not\sqsubseteq \du{b'}$.
  \end{lemma}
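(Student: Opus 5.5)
The plan is to argue by contradiction, using only the co-continuity of $\lt{B}$ together with the fact that $\mb{B}$ is a meet basis; the continuity of $\lt{L}$ plays no role here. This mirrors the middle step of the proof of Proposition~\ref{prop:finite-winning-strategy-dual}, where the same decomposition of $\du{d}$ was used.

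The first step is to recall that in a co-continuous lattice with meet basis $\mb{B}$ every element is the meet of the basis elements way-above it. This is the dual of the fact stated in the preliminaries for continuous lattices and join bases. Concretely,
\[ \du{d} = \bigsqcap \{ \du{b'} \mid \du{b'}\in \mb{B},\ \du{b'}\wag \du{d} \}. \]
If one wants to justify this from scratch, note that $\du{d}$ is a lower bound of the right-hand side, since $\du{b'}\wag\du{d}$ implies $\du{d}\sqsubseteq\du{b'}$. For the other inequality, apply co-continuity to each $m\in\mb{B}$ with $m\sqsupseteq\du{d}$. This writes $m$ as the meet of elements way-above it. Each element $x$ way-above $m$ is in turn the meet of basis elements above $x$, and each such basis element $n$ satisfies $n \sqsupseteq x$, so $n$ is way-above $m$ and hence way-above $\du{d}$. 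Since $\du{d}$ is the meet of all basis elements above it, the equality follows. This is the only slightly delicate point; the paper already uses it freely, so I would cite it as the dual of the stated continuous-lattice fact.

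The second step is the contradiction. Suppose that every $\du{b'}\in\mb{B}$ with $\du{b'}\wag\du{d}$ satisfies $\du{e}\sqsubseteq\du{b'}$. Then $\du{e}$ is a lower bound of the set in the display above, so $\du{e}\sqsubseteq\du{d}$. This contradicts the assumption $\du{e}\not\sqsubseteq\du{d}$.

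Hence some basis element $\du{b'}\wag\du{d}$ must satisfy $\du{e}\not\sqsubseteq\du{b'}$, which is exactly the claim. There is no real obstacle beyond the meet-basis refinement of co-continuity in the first step.
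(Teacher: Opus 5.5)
Your proposal is correct and is essentially the paper's own proof: the paper likewise uses co-continuity with the meet basis $\mb{B}$ to write $\du{d}$ as the meet of the basis elements way-above it, and concludes that some such $\du{b'}$ cannot lie above $\du{e}$, since otherwise $\du{e}\sqsubseteq\du{d}$. Your from-scratch justification of that decomposition is also sound, though the detour through the elements $m\in\mb{B}$ above $\du{d}$ is unnecessary: applying co-continuity directly to $\du{d}$, and then the basis property to each element way-above $\du{d}$, already suffices.
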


\begin{proof}
  We know that
  \[ \du{e}\not\sqsubseteq \du{d} = \bigsqcap \{\du{b'} \mid
    \du{b'}\in \mb{B}, \du{b'}\wag \du{d}\}, \] using the fact that
  $\lt{B}$ is co-continuous. Hence there must be $\du{b'}\in \mb{B}$
  with $\du{b'}\wag \du{d}$ and $\du{e} \not\sqsubseteq \du{b'}$.
\end{proof}
\end{toappendix}

We can show the existence of the following three functions specified
in the table below (see Lemmas~\ref{lem:auxiliary-functions-1} and
\ref{lem:auxiliary-functions-2}). For the first two lines we assume a monotone
function $f\colon\lt{L}\to \lt{L}$ and a finite set
$A\subseteq \jb{L}$. We will in particular instantiate to
$f=\mathsf{id}$ and $f=\lo$. We write $\aps$, $\ads$ instead of
$\auxp{\mathsf{id}}$, $\auxd{\mathsf{id}}$. For $f=\lo$ note that
$\alpha(\lo(\bigsqcup A))) = \be(\alpha(\bigsqcup A)) = \be(\bigsqcup
\alpha[A])$.

\medskip

\mbox{}\hspace{-0.6cm}
\begin{tabular}{|l|l|l|}
  \hline function & parameters & output \\ \hline $\auxp{f}$ &
  $b\in \jb{B}$, $A\subseteq_{\mathit{fin}} \jb{L}$, 
  $b\ll \alpha(f(\bigsqcup A))$ & $\auxp{f}(b,A) = a\in \jb{L}$ with
  $a\ll f(\bigsqcup A)$, $b\ll \alpha(a)$ \\
  $\auxd{f}$ & $\du{b}\in \mb{B}$,
  $A\subseteq_{\mathit{fin}} \jb{L}$,
  $\alpha(f(\bigsqcup A))\not\sqsubseteq \du{b}$ &
  $\auxd{f}(b,A) = \pr{a}\in \jb{L}$ with
  $\pr{a}\ll f(\bigsqcup A)$, $\alpha(\pr{a})\not\sqsubseteq \du{b}$ \\
  $\adwa$ & $\du{e},\du{d}\in \lt{B}$, $\du{e}\not\sqsubseteq \du{d}$
  & $\adwa(\du{e},\du{d}) = \du{b'}\in \mb{B}$ with
  $\du{b'}\wag \du{d}$ and $\du{e} \not\sqsubseteq \du{b'}$ \\
  \hline
\end{tabular}

\medskip

We assume that $\lt{L}$ is continuous lattice with join basis
$\jb{L}$, additionally in the first line the join basis $\jb{B}$
contains only irreducibles, in the second line $\mb{B}$ is a meet
basis and in the third line $\lt{B}$ must additionally be
co-continuous.

The first two functions
  have the task of picking a suitable witness from $\lt{L}$ from a set
  of joint witnesses. The last auxiliary functions chooses a basis
  element of $\lt{B}$ that explains why a given inequality does not
  hold.

\subsection{Primal Case: Transforming Strategies along the Galois
  Connection}

We now spell out how to use primal witnesses to obtain winning
strategies for $\exists$ in the primal way-below game (on
$\be\colon\lt{B}\to \lt{B}$), see
Section~\ref{sec:primal-way-below-game}. We furthermore show how such
winning strategies can be used to construct primal witnesses.
In this subsection we assume that $\lt{L}$, $\lt{B}$ are both
continuous lattices with join bases $\jb{L}$,
$\jb{B}$. Furthermore $\jb{L}$ must contain only way-below
irreducibles.

\begin{propositionrep}
  \label{Wp,lo,E to Wp,be,E}  
  Let $\lo$ be a continuous function, which implies that there is a
  finitary winning strategy $\strat_{p,\lo}^\exists$ for $\exists$ in
  the primal way-below game on $\lo\colon \lt{L}\to \lt{L}$
  (cf. Proposition~\ref{prop:winning-strategy-primal-way-below}).

  Given $b\in \jb{B}$ and a primal witness $\pr{a}$ for $b$
  ($b\ll\alpha(\pr{a})$), we choose the move of $\exists$ in 
  the primal way-below game on $\be:\lt{B}\to\lt{B}$ as 
  \[ d = \alpha(\bigsqcup \strat_{p,\lo}^\exists(\pr{a})). \] Let
  $b'\ll d$ be an answering move of $\forall$. Then
  $\pr{a}' = \aps(b',\strat_{p,\lo}^\exists(\pr{a}))$ is a primal
  witness for $b'$ and $\degr(\pr{a}') < \degr(\pr{a})$.
  We continue from $b'$ and $\pr{a}'$ and obtain
  a winning strategy for $\exists$ in the primal way-below game on
  $\be\colon\lt{B}\to \lt{B}$
  (cf. Definition~\ref{def:primal-way-below-game}) for all
  $b\in \jb{B}$ that have witnesses.
\end{propositionrep}

\begin{proof}
  Let $A = \strat_{p,\be}^\exists(\pr{a})$.  First observe that $d$ is
  a valid move of $\exists$ since $\strat_{p,\lo}^\exists$ is a valid
  strategy in the primal way-below game, hence
  $\lo(\bigsqcup \strat_{p,\lo}^\exists(\pr{a})) \gg \pr{a}$ and thus
  \[
    \be(d) = \be(\alpha(\bigsqcup
    \strat_{p,\lo}^\exists(\pr{a}))) = \alpha(\lo(\bigsqcup
    \strat_{p,\lo}^\exists(\pr{a}))) \sqsupseteq
    \alpha(\pr{a}) \gg b .
  \]
  Using the auxiliary function $\aps$ from
  Section~\ref{sec:auxiliary-functions} 
  we know that
  $\pr{a}' = \aps(b',A)$ satisfies $\pr{a}' \ll \bigsqcup A$ and
  $b'\ll \alpha(\pr{a}')$. That is, $\pr{a}'$ is a primal witness for
  $b'$.  Furthermore, using
  Lemma~\ref{lem:properties-deg}(\ref{lem:properties-deg-3}) and the
  fact that the winning strategy $\strat_{p,\lo}^\exists$
  satisfies the constraints of
  Proposition~\ref{prop:winning-strategy-primal-way-below} by
  assumption (i.e., the elements of $A$ have lower degree than
  $\pr{a}$), we can infer that
  \[ \degr(\pr{a}') \le \max_{\hat{a}\in A} \degr(\hat{a}) <
    \degr(\pr{a}). \]
  
  Hence, this results in a finite game, where eventually $\exists$
  plays $\bot$ and $\forall$ has no move left.
\end{proof}

Now we treat witness construction and assume that there exists a
finitary winning strategy $\strat_{p,\be}^\exists$ for $\exists$ in
the primal way-below game on $\be$
(cf. Definition~\ref{def:primal-way-below-game}) which assigns to each
$b\in \jb{B}$ a set $\strat_{p,\be}^\exists(b) \subseteq \jb{B}$. From
such a strategy we construct a witness $\pr{\mathit{wit}}(b)$ for a
given $b\ll \mu \be$.

\begin{toappendix}
\begin{proposition} 
  \label{Wp,B,E to Wp,L,E}
  Assume that $\be\colon\lt{B}\to\lt{B}$ is continuous, which implies
  the existence of a finitary winning strategy
  $\strat_{p,\be}^\exists$ for $\exists$ in the primal way-below game
  on $\be\colon\lt{B}\to\lt{B}$
  (cf. Proposition~\ref{prop:winning-strategy-primal-way-below}).

  Let $b\in \jb{B}$ with $b\ll \mu\be$ and we assume that
  $\strat_{p,\be}^\exists(b) = F\subseteq \jb{B}$ finite.  For
  each $b'\in F$ we fix a primal witness $\pr{\mathit{wit}}(b')$ and
  $\degr(\pr{\mathit{wit}}(b')) \le \degr(b')$.

  Let $\ell = \lo(\bigsqcup_{b'\in F} \pr{\mathit{wit}}(b'))$. Then
  there exists
  $\apc(b,\{\pr{\mathit{wit}}(b')\mid b'\in F\}) = \pr{a}\in
  \jb{L}$, $\pr{a}\ll \ell$, $\degr(\pr{a})\le \degr(b)$ such that
  $\pr{a}$ is a primal witness for $b$ and the corresponding finitary
  winning strategy is
  $\strat_{p,\lo}^\exists(\pr{a}) = \{\pr{\mathit{wit}}(b')\mid b'\in
  F\}$.
\end{proposition}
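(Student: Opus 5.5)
The plan is to first show that $b\ll\alpha(\ell)$, and then apply the auxiliary function $\apc$ (Lemma~\ref{lem:auxiliary-functions-1}(\ref{lem:aux-1-1}) with $f=\lo$ and $A=\{\pr{\mathit{wit}}(b')\mid b'\in F\}$) to extract a basis element $\pr{a}\in\jb{L}$ with $\pr{a}\ll\ell$ and $b\ll\alpha(\pr{a})$.

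For the first step, write $A$ as above. Since $\pr{\mathit{wit}}(b')$ is a primal witness for $b'$, we have $b'\ll\alpha(\pr{\mathit{wit}}(b'))$, hence $b'\sqsubseteq\alpha(\pr{\mathit{wit}}(b'))$. Because $\alpha$ preserves joins, this gives
\[ \bigsqcup F\sqsubseteq \bigsqcup_{b'\in F}\alpha(\pr{\mathit{wit}}(b')) = \alpha(\bigsqcup A). \]
The strategy $\strat_{p,\be}^\exists$ is a valid move, so $b\ll\be(\bigsqcup F)$. By monotonicity of $\be$ and the assumption $\alpha\circ\lo=\be\circ\alpha$, we get $b\ll\be(\alpha(\bigsqcup A))=\alpha(\lo(\bigsqcup A))=\alpha(\ell)$. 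This is exactly the precondition of $\apc$, which yields $\pr{a}\in\jb{L}$ with $\pr{a}\ll\ell$ and $b\ll\alpha(\pr{a})$.

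It remains to show that $\pr{a}\ll\mu\lo$, which makes $\pr{a}$ a witness, and to bound its degree. Both follow from one computation. Let $k=\degr(b)=j+1$. By Proposition~\ref{prop:winning-strategy-primal-way-below}, each $b'\in F$ has $\degr(b')\le j$, so by assumption $\degr(\pr{\mathit{wit}}(b'))\le j$. Lemma~\ref{lem:properties-deg}(\ref{lem:properties-deg-3}) then gives $\degr(\bigsqcup A)\le j$, i.e.\ $\bigsqcup A\ll\lo^j(\bot)$, and in particular $\bigsqcup A\sqsubseteq\lo^j(\bot)$. By monotonicity, $\ell\sqsubseteq\lo^{j+1}(\bot)$, so $\pr{a}\ll\ell\sqsubseteq\lo^{k}(\bot)\sqsubseteq\mu\lo$. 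This gives both $\pr{a}\ll\mu\lo$ and $\degr(\pr{a})\le k=\degr(b)$, which is essentially Lemma~\ref{lem:properties-deg}(\ref{lem:properties-deg-2}).

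Two edge cases need care. If $F=\emptyset$, then $\bigsqcup A=\bot$ and the argument still applies with $j\ge0$. If some $b'$ has degree $0$, the claim is vacuous, since $\bot\notin\jb{B}$.

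Finally, $A$ is a legitimate finitary $\exists$-move from $\pr{a}$ in the primal way-below game on $\lo$. The set $A$ is finite and contained in $\jb{L}$, $\pr{a}\ll\lo(\bigsqcup A)$, and every element of $A$ has degree at most $j<\degr(\pr{a})$. For the last inequality I need $\degr(\pr{a})=k$ exactly, not just $\le k$. Lemma~\ref{lem:witness-degr} gives $\degr(b)\le\degr(\pr{a})$, and together with the bound above this yields equality.

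I expect the main obstacle to be this degree bookkeeping, i.e.\ making sure that strict decrease holds so that the strategy condition of Proposition~\ref{prop:winning-strategy-primal-way-below} is met. The rest is direct use of $\apc$ and the commutation $\alpha\circ\lo=\be\circ\alpha$.
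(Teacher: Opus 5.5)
Your proof is correct and follows the paper's argument. Like the paper, you establish $b\ll\be(\bigsqcup F)\sqsubseteq\be(\alpha(\bigsqcup A))=\alpha(\ell)$, apply $\apc$, bound the degrees via Lemma~\ref{lem:properties-deg}, and close the strategy claim with Lemma~\ref{lem:witness-degr}. The only cosmetic difference is that you get $\pr{a}\ll\mu\lo$ from the Kleene-stage bound $\ell\sqsubseteq\lo^{\degr(b)}(\bot)$, whereas the paper uses $\pr{\mathit{wit}}(b')\sqsubseteq\mu\lo$ and $\lo(\mu\lo)=\mu\lo$.
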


\begin{proof}
  We have
  \[ b \ll \be(\bigsqcup F) = \be(\bigsqcup_{b'\in F} b') \sqsubseteq
    \be(\bigsqcup_{b'\in F}\alpha(\pr{\mathit{wit}}(b'))) . \] The first
  inequality holds since $F$ is a winning strategy for the game and
  the second is true since $b'\ll \alpha(\pr{\mathit{wit}}(b'))$. We
  use the auxiliary function $\apc$ from
  Section~\ref{sec:auxiliary-functions} and obtain
  $\apc(b,\{\pr{\mathit{wit}}(b')\mid b'\in F\}) = \pr{a}$ with
  $\pr{a}\ll \ell$ and $b\ll\alpha(\pr{a})$, that is $\pr{a}$ is a
  primal witness for $b$.

  Furthermore
  \begin{align} \label{eqn:1}\pr{a}\ll \ell = \lo(\bigsqcup_{b'\in F}
    \pr{\mathit{wit}}(b')) \sqsubseteq \lo(\mu\lo) = \mu\lo, \end{align}
  hence $\pr{a}$ is a primal witness for $b$.

  We argue that $\degr(\pr{a})\le \degr(b)$: by assumption
  $\degr(\pr{\mathit{wit}}(b')) \le \degr(b') < \degr(b)$ for all
  $b'\in F$. The strict inequality follows from the choice of strategy
  according to
  Proposition~\ref{prop:winning-strategy-primal-way-below}.  Hence,
  using Lemma~\ref{lem:properties-deg}(\ref{lem:properties-deg-3}) we
  can infer that
  \begin{eqnarray}\label{eqn:2}
    \degr(\bigsqcup_{b'\in F} \pr{\mathit{wit}}(b')) = \max\{
    \degr(\pr{\mathit{wit}}(b')) \mid b'\in F\} \le \max\{
    \degr(b') \mid b'\in F\} < \degr(b) .
  \end{eqnarray}
  From (\ref{eqn:1}), we know that
  $\pr{a}\ll \lo(\bigsqcup_{b'\in F} \pr{\mathit{wit}}(b'))$, and from
  (\ref{eqn:2}) we have
  $\degr(\bigsqcup_{b'\in F} \pr{\mathit{wit}}(b'))<\degr(b)$.  Via
  Lemma~\ref{lem:properties-deg}(\ref{lem:properties-deg-2}) we obtain
  $\degr(\pr{a})\leq \degr(b)$.
  
  The construction will terminate eventually, since by construction of
  the winning strategy
  (cf. Proposition~\ref{prop:winning-strategy-primal-way-below})
  each $b'\in F$ has a smaller degree than $b$ ($\degr(b')<\degr(b)$).

  Furthermore $\strat_{p,\lo}^\exists(\pr{a})$ is a winning
  strategy for $\pr{a}$ since
  \[ \lo(\bigsqcup \strat_{p,\lo}^\exists(\pr{a})) =
    \lo(\bigsqcup \{\pr{\mathit{wit}}(b')\mid b'\in F\}) = \ell
    \gg \pr{a}
  \] and for each
  $\pr{a}' \ll \bigsqcup_{b'\in F} \pr{\mathit{wit}}(b')$ chosen by
  $\forall$ in the primal way-below game on $\lo$ we have that
  \[ \degr(\pr{a}') \le \degr(\bigsqcup_{b'\in F}
    \pr{\mathit{wit}}(b')) < \degr(b) \le \degr(\pr{a}), \] where the
  first inequality follows from
  Lemma~\ref{lem:properties-deg}(\ref{lem:properties-deg-1}) and last
  inequality from Lemma~\ref{lem:witness-degr}. Hence, the game
  will eventually terminate and $\exists$ will win.
\end{proof}
\end{toappendix}

From the construction one can also extract a finitary winning strategy
for the witness in the game on $\lo$ (see Proposition~\ref{Wp,B,E to
  Wp,L,E}). 

\begin{theoremrep}
  \label{thm:wit-of-StrtEpb}
  Assume that $\be\colon\lt{B}\to\lt{B}$ is continuous, which implies
  the existence of a finitary winning strategy
  $\strat_{p,\be}^\exists$ for $\exists$ in the primal way-below game
  on $\be\colon\lt{B}\to\lt{B}$
  (cf. Proposition~\ref{prop:winning-strategy-primal-way-below}).

  Given $b\in \jb{B}$ with $b\ll \mu\be$, we can compute the witness
  of $b$ inductively as follows:
  \[ \pr{\mathit{wit}}(b) =
    \apc(b,\pr{\mathit{wit}}[\strat_{p,\be}^\exists(b)]) . \] Then
  $\degr(\pr{\mathit{wit}}(b)) \le \degr(b)$ and this is a
  well-defined inductive definition with base case
  $\strat_{p,\be}^\exists(b) = \emptyset$.
\end{theoremrep}

\begin{proof}
  Corollary of Proposition~\ref{Wp,B,E to Wp,L,E}.
\end{proof}

\begin{example}[Witness Construction -- Primal]
  We study the construction of witnesses, also known as distinguishing
  formulas, in the running example. We are again working on
  $\mathbb{B} = \mathit{Rel}(X)$ (ordered by $\supseteq$) with a join
  basis consisting of co-singletons
  (Example~\ref{ex:running-basis-witness}, primal case).

  Assume that $x_1\not\sim x_2$ for $x_1,x_2\in X$ and let
  $b = \overline{\{(x_1,x_2)\}}$. We can choose some finitary strategy
  $\strat_{p,\be}^\exists$ for the primal way-below game on $\be$ (see
  e.g.\ Example~\ref{ex:running-primal-way-below-game}). Let
    $A = \mathit{wit}[\strat_{p,\be}^\exists(b)]$, i.e., we compute
    witnesses recursively.  We then obtain
    $\mathit{wit}(b) = \apc(b,A)$, where the auxiliary function $\apc$
    is defined as follows:
  \begin{itemize}
  \item If $A = \emptyset$ (which holds if $\degr(b) = 1$, hence
    $\strat_{p,\be}^\exists(b) = \emptyset$ and $\exists$ can play
    $\bigsqcup \emptyset = \bigcap \emptyset = X\times X$), then
    define $\apc(b,A) = \Diamond\mathit{true}\in \lo(\emptyset)$
    (where $\mathit{true}$ is the empty conjunction).
  \item Otherwise we know by construction that
    \begin{align*}
      \alpha(\lo(\bigsqcup A)) = \be(\alpha(A)) \gg b &\iff
      \be(\alpha(A)) \subseteq
      \overline{\{(x_1,x_2)\}} \\
      &\iff (x_1,x_2)\not\in \be(\alpha(A))
    \end{align*}
    Since $\alpha(A)$ suffices to distinguish $x_1,x_2$ (after
    applying $\be$), it must be the case that there exists
    $y_1\in\mathit{succ}(x_1)$ that is not related via
    $\alpha(A)$ to any state in $y_2\in \mathit{succ}(x_2)$
    (or vice versa). Hence for $y_1$ and each such $y_2$ there exists
    a predicate $P'\in A$ that separates them and we can define
    \[ \apc(b, A) = \Diamond \big(\bigcap_{P'\in A, y_1\in P'} P'
      \cap \bigcap_{P'\in A, y_1\not\in P'} \lnot P'\big), \] which
    contains $x_1$ (which has a successor $y_1$ satisfying all
    predicates) but not $x_2$ (where for each successor $y_2$ there
    exists a predicate not containing $y_2$).
  \end{itemize}
\end{example}

\subsection{Primal/Dual Case: Transforming Strategies along
  the Galois Connection}
\label{Sec:P/DCase:TransWSGC}

We now explain how to transform dual witnesses into winning strategies for
$\forall$ in the dual game on $\be\colon\lt{B}\to\lt{B}$.
In this subsection we assume that $\lt{L}$ is a continuous lattice
with join basis $\jb{L}$ and $\lt{B}$ is a co-continuous lattice
with meet basis $\mb{B}$.

\begin{propositionrep}\label{Wp,L,E to Wd,B,A}
  Assume that $\lo$ is continuous, which implies the existence of a
  finitary winning strategy $\strat_{p,\lo}^\exists$ for $\exists$ in
  the primal way-below game on $\lo\colon\lt{L}\to\lt{L}$
  (cf. Proposition~\ref{prop:winning-strategy-primal-way-below}).

  Let $\du{b}\in \mb{B}$ and let $a$ be a dual witness for $\du{b}$
  ($\alpha(a)\not\sqsubseteq \du{b}$). We let
  $A = \strat_{p,\lo}^\exists(a)$. Given a move $\du{d}\in\lt{B}$ by
  $\exists$, the $\forall$ player plays 
  \[ \du{b'} = \adwa(\alpha(\bigsqcup A),\du{d}), \] which has a dual
  witness $a' = \ads(\du{b'},A)$ with $\degr(a') < \degr(a)$. We
  continue from $\du{b'}$ and $a'$ and obtain a winning strategy for
  $\forall$ in the dual game on $\be\colon\lt{B}\to\lt{B}$ (see
  Definition~\ref{def:dual-game}).
\end{propositionrep}   

\begin{proof}
  Note that $\be(\du{d})\sqsubseteq \du{b}$. Since
  $\alpha(a)\not\sqsubseteq \du{b}$ we can infer that
  $\alpha(a)\not\sqsubseteq \be(\du{d})$. Furthermore
  $a \ll \lo(\bigsqcup A)$, otherwise
  $\strat_{p,\lo}^\exists$ would not be a winning strategy in the
  primal way-below game. Hence
  \[
    \alpha(a) \sqsubseteq  \alpha(\lo(\bigsqcup A)) =
    \be(\alpha(\bigsqcup A))  = \be(\bigsqcup \alpha[A]) 
    \not\sqsubseteq \be(\du{d})
  \]
  By monotonicity of $\be$ this implies:
  \[ \bigsqcup \alpha[A] \not\sqsubseteq \du{d} \] Hence we obtain
  $\du{b'} = \adwa(\bigsqcup \alpha[A],\du{d})$ which satisfies
  $\du{b'} \wag\du{d}$ and $\alpha[A]\not\sqsubseteq \du{b'}$ and is
  the chosen move of $\forall$.

  Furthermore $a' = \ads(\du{b'},A)$ satisfies $a'\ll \bigsqcup A$ and
  $\alpha(a') \not\sqsubseteq \du{b'}$, hence $a'$ is a dual witness
  for $\du{b'}$.

  From Lemma~\ref{lem:witness-degr} we obtain
  $\cdegr(\du{b'}) \le \degr(a')$ and we have
  \[ \degr(a') \le \degr(\bigsqcup A) = \max_{\hat{a}\in A}
    \degr(\hat{a}) < \degr(a), \] which follows from the choice of
  strategy. 

  When the game continues according to this strategy, $\forall$ will 
  play basis elements of smaller and smaller degree and finally win 
  because $\exists$ has no move left.   
\end{proof}

We will now explain how to construct a witness from a winning strategy
for $\forall$ in the dual game on $\be\colon\lt{B}\to\lt{B}$, using
the strategy from
Proposition~\ref{prop:finite-winning-strategy-dual}. From the
construction one can also extract a winning strategy for the witness
in the game on $\lo$ (cf. Proposition~\ref{Wd,be,A to
  Wp,lo,E}).

\begin{toappendix}
  \begin{proposition}\label{Wd,be,A to Wp,lo,E}
    Assume that $\be$ is co-proper wrt.\ $\mb{B}$, which implies the
    existence of a finitary winnning strategy $\strat_{d,\be}^\forall$
    for $\forall$ in the dual game on $\be\colon\lt{B}\to\lt{B}$.
    (cf. Proposition~\ref{prop:finite-winning-strategy-dual}).

  Let $\du{b}\in \mb{B}$ with $\mu\be\not\sqsubseteq\du{b}$ and let
  $F = \strat_{d,\be}^\forall(\du{b})$ be the set of possible answers
  of $\forall$.  For each $\du{b'}\in F$ we fix a dual witness
  $\mathit{wit}(\du{b'})$ with
  $\degr(\mathit{wit}(\du{b'}))\le \cdegr(\du{b'})$.

  Let $A = \{ \mathit{wit}(\du{b'})\mid \du{b'}\in F\}$. Then there
  exists $a = \adc(\du{b},A)\in \jb{L}$, $a\ll \lo(\bigsqcup A)$,
  $\degr(a)\le \cdegr(\du{b})$ such that $a$ is a dual witness for
  $\du{b}$ and the corresponding finitary winning strategy is
  $\strat_{p,\lo}^\exists(a) = \{\mathit{wit}(\du{b'})\mid
  \du{b'}\in F\}$.
\end{proposition}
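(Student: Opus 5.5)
Following the primal case (Proposition~\ref{Wp,B,E to Wp,L,E}), the plan is to show that $\alpha(\lo(\bigsqcup A))\not\sqsubseteq \du{b}$. Once this holds, the auxiliary function $\adc$ from Section~\ref{sec:auxiliary-functions} (Lemma~\ref{lem:auxiliary-functions-1}(\ref{lem:aux-1-2}) applied with $\ell=\lo(\bigsqcup A)$) yields $a\in\jb{L}$ with $a\ll\lo(\bigsqcup A)$ and $\alpha(a)\not\sqsubseteq\du{b}$.

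The key step argues by contradiction. Suppose $\alpha(\lo(\bigsqcup A)) = \be(\bigsqcup\alpha[A])\sqsubseteq\du{b}$, and set $\du{d}=\bigsqcup\alpha[A]$. Then $\du{d}$ is a legal move of $\exists$ from $\du{b}$ in the dual game. By Proposition~\ref{prop:finite-winning-strategy-dual}, some $\du{b'}\in F$ satisfies $\du{b'}\wag\du{d}$, which in particular gives $\du{d}\sqsubseteq\du{b'}$. But $\mathit{wit}(\du{b'})\in A$ is a dual witness for $\du{b'}$, so
\[ \alpha(\mathit{wit}(\du{b'}))\sqsubseteq\du{d}\sqsubseteq\du{b'}, \]
contradicting $\alpha(\mathit{wit}(\du{b'}))\not\sqsubseteq\du{b'}$. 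I expect this to be the main obstacle, because it is where the uniformity of $F$ (valid against every move of $\exists$) is essential. One small point needs checking: $\du{b'}\wag\du{d}$ implies $\du{d}\sqsubseteq\du{b'}$, which follows from the definition using the filtered set $\{\du{d}\}$.

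Next, $a$ is a dual witness. Each $\mathit{wit}(\du{b'})\ll\mu\lo$, so $\bigsqcup A\sqsubseteq\mu\lo$. Hence
\[ a\ll\lo(\bigsqcup A)\sqsubseteq\lo(\mu\lo)=\mu\lo, \]
so $a\ll\mu\lo$. For the degree bound, by hypothesis and Proposition~\ref{prop:finite-winning-strategy-dual} we have $\degr(\mathit{wit}(\du{b'}))\le\cdegr(\du{b'})<\cdegr(\du{b})$. Lemma~\ref{lem:properties-deg}(\ref{lem:properties-deg-3}), extended to finite joins, gives $\degr(\bigsqcup A)<\cdegr(\du{b})$. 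Lemma~\ref{lem:properties-deg}(\ref{lem:properties-deg-2}) then gives $\degr(a)\le\degr(\bigsqcup A)+1\le\cdegr(\du{b})$. If $A=\emptyset$, then $\bigsqcup A=\bot$ and the same bounds hold trivially.

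Finally, $A$ is a finitary winning move for $\exists$ from $a$ in the primal way-below game on $\lo$. It is a finite subset of $\jb{L}$ with $a\ll\lo(\bigsqcup A)$. Any answer $a'\ll\bigsqcup A$ of $\forall$ satisfies, by Lemmas~\ref{lem:properties-deg}(\ref{lem:properties-deg-1}) and \ref{lem:witness-degr},
\[ \degr(a')\le\degr(\bigsqcup A)<\cdegr(\du{b})\le\degr(a). \]
So degrees strictly decrease and the game terminates with a win for $\exists$, exactly as in Proposition~\ref{prop:winning-strategy-primal-way-below}.
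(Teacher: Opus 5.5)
Your proof is correct and follows essentially the same route as the paper: the key contradiction (that $\bigsqcup\alpha[A]$ would be a legal move of $\exists$ and hence lie below some $\du{b'}\in F$, which the paper phrases via the finite co-open cover $\{\mathop{\wadownarrow}\du{b'}\}_{\du{b'}\in F}$ of $\be^{-1}(\mathop{\downarrow}\du{b})$), then $\adc$, then the degree bounds via Lemma~\ref{lem:properties-deg}, and finally Lemma~\ref{lem:witness-degr} for the winning strategy. You are in fact slightly more explicit than the paper in checking $a\ll\mu\lo$ and that $\du{b'}\wag\du{d}$ implies $\du{d}\sqsubseteq\du{b'}$.
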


\begin{proof}
  We define $\ell = \lo(\bigsqcup A) = \lo(\bigsqcup_{\du{b'}\in
    F} \mathit{wit}(\du{b'})))$.
  
  We set $k=\cdegr(\du{b})$ which implies that
  $\be^k(\bot)\not\sqsubseteq \du{b}$.

  Note also that
  $\alpha(\mathit{wit}(\du{b'})) \not\sqsubseteq \du{b'}$, since the
  $\mathit{wit}(\du{b'})$ are dual witnesses for the~$\du{b'}$.

  We first show that $\alpha(\ell)\not\sqsubseteq \du{b}$ and 
  observe that
  \begin{eqnarray*}
    && \alpha(\ell) = \alpha(\lo(\bigsqcup_{\du{b'}\in F}
    \mathit{wit}(\du{b'}))) = \be(\alpha(\bigsqcup_{\du{b'}\in F}
    \mathit{wit}(\du{b'}))) = \be(\bigsqcup_{\du{b'}\in F}
    \alpha(\mathit{wit}(\du{b'}))).
  \end{eqnarray*}

  We observe that $\alpha(\ell)\not\sqsubseteq \du{b}$ holds if
  and only if
  $\bigsqcup_{\du{b'}\in F} \alpha(\mathit{wit}(\du{b'}))
  \not\in\be^{-1}(\mathop{\downarrow} \du{b})$.
  Assume by contradiction that
  $\bigsqcup_{\du{b'}\in F} \alpha(\mathit{wit}(\du{b'}))
  \in\be^{-1}(\mathop{\downarrow} \du{b})$. In order to be valid
  strategy the sets $\mathop{\wadownarrow} \du{b'}$, $\du{b'}\in F$
  must form a finite open cover of
  $\be^{-1}(\mathop{\downarrow} \du{b})$
  (cf.~Proposition~\ref{prop:finite-winning-strategy-dual}),
  hence there exists $\du{b'}^*\in F$ such that
  $\bigsqcup_{\du{b'}\in F} \alpha(\mathit{wit}(\du{b'})) \wal
  \du{b'}^*$.  From this we get
  $\alpha(\mathit{wit}(\du{b'}^*)) \sqsubseteq \du{b'}^*$, a
  contradiction. Hence
  $\bigsqcup_{\du{b'}\in F} \alpha(\mathit{wit}(\du{b'}))
  \not\in\be^{-1}(\mathop{\downarrow} \du{b})$ and so
  $\alpha(\ell)\not\sqsubseteq \du{b}$.

  From this we obtain $a = \adc(\du{b},A)$ such that $a\ll \ell$
  and $\alpha(a) \not\sqsubseteq \du{b}$, hence $a$ is a dual witness
  for $\du{b}$.

  We argue that $\degr(a) \le \cdegr(\du{b})$: by assumption
  $\degr(\mathit{wit}(\du{b'})) \le \cdegr(\du{b'}) < \cdegr(\du{b})$.
  The second inequality follows from the choice of strategy according
  to Proposition~\ref{prop:finite-winning-strategy-dual}.  Using
  Lemma~\ref{lem:properties-deg}(\ref{lem:properties-deg-3}) we can
  infer that
  \begin{eqnarray*}
    \degr(\bigsqcup_{\du{b'}\in F} \mathit{wit}(\du{b'})) = \max\{
    \degr(\mathit{wit}(\du{b'})) \mid \du{b'}\in F\} \le \max\{
    \cdegr(\du{b'}) \mid \du{b'}\in F\} < \cdegr(\du{b}). 
  \end{eqnarray*}
  Hence, since $a\ll\ell=\lo(\bigsqcup_{\du{b'}\in F}
    \mathit{wit}(\du{b'}))$ and using
  Lemma~\ref{lem:properties-deg}(\ref{lem:properties-deg-2}), 
  we get
  \[ \degr(a) \le \degr(\bigsqcup_{\du{b'}\in F}
    \mathit{wit}(\du{b'})) +1 \le \cdegr(\du{b}). \]

  The construction will terminate eventually, since by construction of
  the winning strategy
  (cf. Proposition~\ref{prop:finite-winning-strategy-dual}) each
  $\du{b'}\in F$ has a smaller degree than $\du{b}$
  ($\cdegr(\du{b'})<\cdegr(\du{b})$).

  Furthermore $\strat_{p,\lo}^\exists(a)$ is a winning strategy for
  $a$ since whenever $\forall$ plays
  $\pr{a}' \ll \bigsqcup_{\du{b'}\in F} \mathit{wit}(\du{b'})$ in the
  primal way-below game on $\lo$ we have
  \[ \degr(\pr{a}') \le \degr(\bigsqcup_{\du{b'}\in F}
    \mathit{wit}(\du{b'})) < \cdegr(\du{b}) \le \degr(a), \] where the
  first inequality follows from
  Lemma~\ref{lem:properties-deg}(\ref{lem:properties-deg-1}) and the
  last equality from Lemma~\ref{lem:witness-degr}. Hence, the
  game will eventually terminate and $\exists$ will win.

  In the base case ($F=\emptyset$) note that $\ell = \lo(\bot)$. We 
  choose $a\ll\lo(\bot)$ as witness and the construction terminates 
  since $a\ll \mu\lo$ can be witnessed by playing $\bigsqcup \emptyset
  = \bot$.
\end{proof}
\end{toappendix}

\begin{theoremrep}
  \label{thm:stratFdbToWit}
  Assume that $\be$ is co-proper wrt.\ $\mb{B}$, which implies the
  existence of a finitary winnning strategy $\strat_{d,\be}^\forall$
  for $\forall$ in the dual game on $\be\colon\lt{B}\to\lt{B}$.
  (cf. Proposition~\ref{prop:finite-winning-strategy-dual}).

  Given $\du{b}\in \mb{B}$ with $\mu\be\not\sqsubseteq\du{b}$,
  we can compute the witness of $\du{b}$ inductively as:
  \[ \mathit{wit}(\du{b}) =
    \adc(\du{b},\mathit{wit}[\strat_{d,\be}^\forall(\du{b})]). \] Then
  $\degr(\mathit{wit}(\du{b}))\le \degr(\du{b})$ and this is a
  well-defined inductive definition with base case
  $\strat_{d,\be}^\forall(\du{b}) = \emptyset$.
\end{theoremrep}

\begin{proof}
  Corollary of Proposition~\ref{Wd,be,A to Wp,lo,E}.
\end{proof}

\begin{example}[Witness Construction -- Dual]
  In the running example, we choose a meet basis (for the order
  $\supseteq$) for $\mathit{Rel}(X)$ with singletons as in
  Example~\ref{ex:running-basis-witness}. 

  Now fix $\du{b} = \{(x_1,x_2)\}$ with $x_1\not\sim x_2$ and we
    choose some finitary strategy $\strat_{d,\be}^\forall$ for
    the dual game on $\be$ (see e.g.\
    Example~\ref{ex:running-dual-game}). By construction
    $\mathit{wit}(\du{b}) = \adc(\du{b},A)$, where
    $A = \mathit{wit}[\strat_{d,\be}^\forall(\du{b})]$ and $\adc$ is
    defined as follows:
    \begin{itemize}
    \item If $A=\emptyset$ (which holds if $\cdegr(\du{b}) = 1$), we again
      set $\adc(\du{b},A) = \Diamond \mathit{true}$.
    \item Otherwise
      \[
        \alpha(\lo(\bigsqcup A)) = \be(\alpha(A)) \not\sqsubseteq
        \du{b} \iff \be(\alpha(A)) \not\supseteq
        \{(x_1,x_2)\} \iff (x_1,x_2)\not\in \be(\alpha(A))
      \]
      As in the primal case we determine a state
      $y_1\in \mathit{succ}(x_1)$ that is not related (via
      $\alpha(A)$) to any state in $\mathit{succ}(x_2)$ (or vice
      versa) and define:
      \[ \adc(\du{b},A) = \Diamond \big(\bigcap_{P'\in A, y_1\in
          P'} P' \cap \bigcap_{P'\in A, y_1\not\in P'} \lnot
        P'\big). \]
    \end{itemize}
\end{example}

\section{Case Studies}
\label{sec:instantiations}

\subsection{Behavioural Metrics}
\label{sec:inst-beh-metrics}

We now consider the case of behavioural metrics, where we measure the
distance of two states. Here we focus on probabilistic transition
systems. The construction of distinguishing formulas in this setting
was presented previously in \cite{RadyBreugelExplProbBisim}, here we
show how a similar construction arises as a
special case of our theory. Note that the instantiation of
probabilistic transition systems to the Galois connection approach is
new, it was not studied in \cite{HMLGalois}.

We consider labelled Markov chains $(X, \delta, \ell)$ consisting of a
\emph{finite} state space $X$, a probabilistic transition function
$\delta\colon X\to \mathcal{D}(X)$ (where $\mathcal{D}(X)$ is the set
of probability distributions over $X$) and a labelling function
$\ell\colon X\to \Lambda$.

\subparagraph*{Lattices, Functions and Galois Connection:}~

\noindent\emph{Behaviour:} On the behaviour side we fix the lattice
$\lt{B} = (\mathit{Dist}(X),\le)$ (where $\mathit{Dist}(X)$ denotes
the set of distance functions over $X$, i.e., functions of the form
$d\colon X\times X\to [0,1]$). The way-below and way-above relations
$\ll$, $\wal$ are the pointwise $<$-orders (in addition $0\ll 0$,
$1\wal 1$), see Example~\ref{ex:way-below-above}. The behaviour function
$\be$ is characterized as: given $d\in \mathit{Dist}(X)$ it holds that
\begin{eqnarray*}
  \be(d)(x_1,x_2) & = &
  \begin{cases}
    \mathcal{K}(d)(\delta(x_1),\delta(x_2)) & \text{if
      $\ell(x_1)=\ell(x_2)$} \\
    1 & \text{otherwise}
  \end{cases}
\end{eqnarray*}
where $\mathcal{K}$ is the (price-function based) Kantorovich lifting
\cite{v:optimal-transport,bbkk:coalgebraic-behavioral-metrics} that
transform a distance on $X$ into a distance on $\mathcal{D}(X)$. In
terms of the Galois connection this is defined as
$\mathcal{K} = \alpha\circ\bigcirc\circ \gamma$. Here
$\bigcirc f(x) = \mathbb{E}_{\delta(x)}[f]$ where
$\mathbb{E}_{p}[f] = \sum_{x\in X} f(x)\cdot p(x)$ determines the
expectation of random variable $f$ under probability distribution $p$.
The function $\be$ is obviously monotone (see also
\cite{RadyBreugelExplProbBisim}).

Hence $\mu\be(x_1,x_2)$ denotes the behavioural distance of $x_1,x_2$,
based on the Kantorovich lifting.

\medskip

\noindent\emph{Logic:} The lattice on the logic side is
$\lt{L} = (\mathcal{P} ([0,1]^X),\subseteq)$ (sets of random
variables). The way-below relation for powerset lattices is defined in
Example~\ref{ex:way-below-above}.  The logic function is based on the
operators of \cite{RadyBreugelExplProbBisim}. We define
$\lo\colon \mathcal{P}([0,1]^X)\to \mathcal{P}([0,1]^X)$ with
\[
  \lo(\mathcal{F}) = \{ [a] \mid a\in\Lambda\} \cup \{ \bigcirc f \mid
  f\in \mathit{cl}(\mathcal{F}) \}
\]
where $[a]\colon X\to [0,1]$ is defined as $[a](x) = 1$ if $\ell(x)=a$
and $0$ otherwise. Furthermore $\mathit{cl}(\mathcal{F})$ is the
closure of the set $\mathcal{F}\subseteq [0,1]^X$ under the operators
$1-f$, $f\ominus q$ ($q\in \mathbb{Q}\cap [0,1]$, $\ominus$ is the
modified subtraction) and $\max(f,g)$ for functions
$f,g\colon X\to [0,1]$.

\medskip

\noindent\emph{Galois Connection:} The Galois connection is given as
follows:
\[
  \alpha(\mathcal{F}) = (x_1,x_2)\mapsto \sup_{f\in \mathcal{F}}
   |f(x_1)-f(x_2)| \qquad \gamma(d) = \{ f\colon X\to
  [0,1] \mid \text{$f$ non-expansive wrt. $d$} \}
\]
Given $d\in \mathit{Dist}(X)$, a \emph{non-expansive} function
$f\colon X\to [0,1]$ wrt.\ $d$ must satisfy
$|f(x_1)-f(x_2)| \le d(x_1,x_2)$.

\begin{toappendix}
  We argue that all the requirements are satisfied. First, both
  lattices are continuous and co-continuous.

  Second, we have to check whether $\alpha\circ \lo = \be\circ
  \mathop{\alpha}$.

  \begin{proposition}
    In the setting of behavioural metrics, it holds that
    $\be\circ \mathop{\alpha} = \alpha \circ \lo$.
  \end{proposition}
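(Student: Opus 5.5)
We fix $\mathcal{F}\subseteq[0,1]^X$ and a pair $(x_1,x_2)$, and show $\alpha(\lo(\mathcal{F}))(x_1,x_2)=\be(\alpha(\mathcal{F}))(x_1,x_2)$. Write $d=\alpha(\mathcal{F})$. The argument splits on the labels.

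If $\ell(x_1)\neq\ell(x_2)$, then $\be(d)(x_1,x_2)=1$. On the logic side, $[\ell(x_1)]\in\lo(\mathcal{F})$ separates the two states with value difference $1$, and all values lie in $[0,1]$. So both sides equal $1$.

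If $\ell(x_1)=\ell(x_2)$, every $[a]$ contributes $0$. The left-hand side is then $\sup_{f\in\mathit{cl}(\mathcal{F})}|\mathbb{E}_{\delta(x_1)}[f]-\mathbb{E}_{\delta(x_2)}[f]|$. The right-hand side is $\mathcal{K}(d)(\delta(x_1),\delta(x_2))=\sup_{g\in\gamma(d)}|\mathbb{E}_{\delta(x_1)}[g]-\mathbb{E}_{\delta(x_2)}[g]|$, unfolding $\mathcal{K}=\alpha\circ\bigcirc\circ\gamma$ at the pair of distributions.

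For $\le$: each operator $1-f$, $f\ominus q$ and $\max(f,g)$ preserves non-expansiveness wrt.\ a pseudometric-like $d$. This holds because $d(x,y)\ge|f(x)-f(y)|$ for every $f\in\mathcal{F}$, and each operator is $1$-Lipschitz in the sup sense. Hence $\mathit{cl}(\mathcal{F})\subseteq\gamma(d)$, and the inequality follows directly.

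For $\ge$, which is the main obstacle, every $g\in\gamma(d)$ must be approximated, on the finite set $X$, by elements of $\mathit{cl}(\mathcal{F})$ up to any $\epsilon>0$. I would use a lattice Stone--Weierstrass style argument on the finite space, in the spirit of \cite{RadyBreugelExplProbBisim}. The steps are as follows.
\begin{enumerate}
\item For each pair $(y,z)$, pick $f\in\mathcal{F}$ with $|f(y)-f(z)|\ge d(y,z)-\epsilon$. Using $1-f$ if needed, arrange $f(y)\ge f(z)$.
\item Shift and truncate with $\ominus q$ for rational $q$, together with $1-\cdot$ to obtain truncation from above. This produces $h_{y,z}\in\mathit{cl}(\mathcal{F})$ with $h_{y,z}(y)\approx g(y)$ and $h_{y,z}(z)\lesssim g(z)+\epsilon$. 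The bound at $z$ uses $g(y)-g(z)\le d(y,z)$.
\item Take the finite maximum over $z$ and then the minimum over $y$, where $\min$ is obtained from $\max$ and $1-\cdot$. The result is within $O(\epsilon)$ of $g$ everywhere.
\end{enumerate}
Step~2 is the delicate part, since only $f\ominus q$ and rational $q$ are available. Combining $1-\cdot$ with $\ominus$ gives $\min(f+q,1)$, so both additive shifts and clamping are available up to rational precision, and the $\epsilon$-approximation survives.

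Since expectations are $1$-Lipschitz under sup-norm perturbations of $g$, the supremum over $\mathit{cl}(\mathcal{F})$ reaches that over $\gamma(d)$. If $\mathcal{F}$ is empty, the closure contains constants only via the $[a]$ terms, which lie outside the closure. In that case $d=0$, $\gamma(d)$ consists of constants, and both sides are $0$; this edge case is checked separately.
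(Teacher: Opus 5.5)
Your route is the paper's: you split on labels, obtain $\le$ from $\mathit{cl}(\mathcal{F})\subseteq\gamma(\alpha(\mathcal{F}))$ (non-expansiveness of the operators), and obtain $\ge$ from density of $\mathit{cl}(\mathcal{F})$ in $\gamma(\alpha(\mathcal{F}))$ together with sup-norm continuity of $\bigcirc$. The paper only cites the density result. Your lattice Stone--Weierstrass sketch of it is sound in Steps~1--2 and in the $\mathcal{F}=\emptyset$ case. However, Step~3 as stated does not work.

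Step~2 gives a two-sided estimate only at the first index, $h_{y,z}(y)\approx g(y)$. At the second index it gives only the upper bound $h_{y,z}(z)\le g(z)+\epsilon$. Your combination $\min_y\max_z h_{y,z}$ would need the lower bound $h_{y,z}(z)\ge g(z)-\epsilon$, which Step~2 does not supply, and it genuinely fails. Take $X=\{y,z\}$, $\mathcal{F}=\{f\}$ with $f(y)=1$, $f(z)=0$ (so $d(y,z)=1$), and $g\equiv\tfrac12$. The choices $h_{y,y}=h_{y,z}=f\ominus\tfrac12$ and $h_{z,y}=h_{z,z}=(1-f)\ominus\tfrac12$ satisfy all your Step~2 requirements. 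Yet $\max_z h_{y,z}$ is $(\tfrac12,0)$ for first index $y$ and $(0,\tfrac12)$ for first index $z$, so their pointwise minimum is $0$, at sup-distance $\tfrac12$ from $g$.

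There are two easy fixes. First, you can use the other order, $\max_y\min_z h_{y,z}$. Then $\min_z h_{y,z}(w)\le h_{y,w}(w)\le g(w)+\epsilon$ for every $w$, and the inner minimum is $\approx g(y)$ at $y$. So the outer maximum is $\ge g(w)-\epsilon$ (take $y=w$) and $\le g(w)+\epsilon$ everywhere. Second, you can keep your order and instead arrange $f(y)\le f(z)$ in Step~1. After the shift, $h_{y,z}(z)\ge g(y)+d(y,z)-\epsilon\ge g(z)-\epsilon$, and clamping preserves this lower bound. With either correction your argument is complete, and more explicit than the paper's.
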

  
  \begin{proof}
    We have to show that
    \[ \be\circ \mathop{\alpha}(\mathcal{F})(x_1,x_2) = \alpha \circ
      \lo(\mathcal{F})(x_1,x_2) \] for $\mathcal{F}\subseteq [0,1]^X$
    and $x_1,x_2\in X$.

    Whenever $\ell(x_1)\neq \ell(x_2)$ we obtain $1$ on the left-hand
    side by definition of $\be$. Furthermore the logic function $\lo$
    provides the formula $[a]$ where $a=\ell(x_1)$ that evaluates on
    $x_1$ to $1$ and on $x_2$ to $0$, which witnesses distance $1$ on
    the right-hand side.

    Now let $\ell(x_1)=\ell(x_2)$. Then we have -- via the definition
    of the Kantorovich lifting $\mathcal{K}$ -- on the left-hand side:
    \begin{eqnarray*}
      \be\circ\alpha(\mathcal{F})(x_1,x_2) & = & \mathcal{K}\circ
      \alpha(\mathcal{F})(x_1,x_2) = \alpha\circ\bigcirc\circ
      \gamma\circ\alpha(\mathcal{F})(x_1,x_2)
    \end{eqnarray*}
    Note that $\gamma\circ\alpha(\mathcal{F})$ is the set of all
    non-expansive functions of the metric that is generated by
    $\mathcal{F}$.  On the right-hand side we can apply the operators
    of the logic to $\mathcal{F}$ to approximate every such
    non-expansive function with arbitrary precision
    \cite{bw:behavioural-pseudometric,km:bisim-games-logics-metric}. Since
    -- due to the non-expansiveness of the operators -- we never
    over-estimate distances and $\bigcirc$ is continuous, we can infer
    that $\alpha \circ \lo(\mathcal{F})(x_1,x_2)$ results in the same
    value than the one on the left-hand side. 
  \end{proof}
\end{toappendix}

\subparagraph*{Primal Case:}
We now construct witnesses in the primal case. We first observe that
$\be$ is indeed continuous.

\medskip

\noindent\emph{Basis:} As join basis of $\lt{L}$ (consisting of way-below
irreducibles) we consider singleton sets $\{f\}$ where
$f\colon X\to [0,1]$, while the join basis of $\lt{B}$ contains all
distance functions $d_{x_1,x_2}^c$ (for $x_1\neq x_2$, $c>0$) that are
defined as
\[ d_{x_1,x_2}^c(y_1,y_2) =
  \begin{cases}
    c & \text{if $(x_1,x_2) = (y_1,y_2)$} \\
    0 & \text{otherwise}
  \end{cases}
\]

\noindent\emph{Strategy computation (cf. Prop.~\ref{prop:winning-strategy-primal-way-below}):} Let $b = d_{x_1,x_2}^c$ be a
basis element that is way-below $\mu\be$ (i.e.,
$c < \mu\be(x_1,x_2)$), for which we determine the strategy of
$\exists$. Let $k=\degr(b)$ and define $d_k=\be^k(0)$ as the
$k$-th iterate in the Kleene iteration.

Whenever $k=1$ we have $\ell(x_1)\neq \ell(x_2)$ and $\exists$ can
play $\emptyset$.

Otherwise we rely on the coupling characterization of the Kantorovich
lifting \cite{v:optimal-transport}. Given two probability
distributions $p,q\in\mathcal{D}(X)$, a \emph{coupling} of
$p,q$ is a probability distribution $C\in \mathcal{D}(X\times X)$
with $p,q$ as marginals, i.e., for all $x_1\in X$:
$\sum_{x_2\in X} C(x_1,x_2) = p(x_1)$ and for all $x_2\in X$:
$\sum_{x_1\in X} C(x_1,x_2) = q(x_2)$. We denote the set of
couplings of $p,q$ as $\Gamma(p,q)$. Then -- for a
pseudometric $d$ -- we can spell out the Kantorovich lifting as
\begin{eqnarray*}
  \mathcal{K}(d)(x_1,x_2) & = & \inf \{\sum_{y_1,y_2\in X}
    C(y_1,y_2)\cdot d(y_1,y_2) \mid C\in
    \Gamma(\delta(x_1),\delta(x_2)) \} \\
    & = & \inf \{ \mathbb{E}_C[d] \mid
    C\in \Gamma(\delta(x_1),\delta(x_2)) \} .
\end{eqnarray*}
The set of all couplings form a polytope and -- as is well-known in
linear programming -- the infimum above is a minimum and is achieved
in one of the (finitely many) vertices of the coupling polytope. We
denote the set of vertices by $\Gamma_V(\delta(x_1),\delta(x_2))$ and
can replace $\Gamma(\delta(x_1),\delta(x_2))$ in the equation above by
$\Gamma_V(\delta(x_1),\delta(x_2))$ and $\inf$ by $\min$.

Let $\mathit{succ}(x_i)$ be the states reachable from $x_i$ with
non-zero probability. For each pair
$y_1\in \mathit{succ}(x_1),y_2\in \mathit{succ}(x_2)$ determine
constants $c_{y_1,y_2}\in[0,1]$ -- which may equal $0$ -- that satisfy
the following inequalities for some coupling $C$ of the successor
sets:
\[ \sum_{y_1,y_2} C(y_1,y_2)\cdot c_{y_1,y_2} > c \qquad\qquad
  d_{k-1}(y_1,y_2) > c_{y_1,y_2} \mbox{ or } c_{y_1,y_2}=0. \] We also have to include
inequalities characterizing pseudometrics (reflexivity, symmetry,
triangle inequality) so that we obtain a pseudometric, ensuring that
the coupling-based definition coincides with the Kantorovich lifting
and we obtain a valid strategy.

Then collect all basis elements $d_{y_1,y_2}^{c_{y_1,y_2}}$ (for
$c_{y_1,y_2}\neq 0$) as the finitary strategy
$\strat_{p,\be}^\exists(d_{x_1,x_2}^c)$.

This is reminiscent of the game introduced in
\cite{vjwb:explainability-probabilistic-game} where couplings are used
as policies explaining the distance.

\medskip

\noindent\emph{Auxiliary functions
  (cf. Sct.~\ref{sec:auxiliary-functions}):} Let $b = d_{x_1,x_2}^c$
and $A \subseteq [0,1]^X$. We define:
\begin{itemize}
\item $\aps(b,A) = f$, where $f\in A$ such that $|f(x_1)-f(x_2)| > c$.
\item $\apc(b,A)$: if $\ell(x_1)\neq \ell(x_2)$ choose $[a]\in\lo(A)$.
  Otherwise, since
  \[ c < \be(\bigsqcup \alpha[A])(x_1,x_2) =
    \mathcal{K}(\bigsqcup\alpha[A])(x_1,x_2) =
    \alpha\circ\bigcirc\circ\gamma(\bigsqcup\alpha[A])(x_1,x_2) \]
  there exists a price function $f$ that is non-expansive wrt.\
  $\bigsqcup\alpha[A]$ and $|\bigcirc f(x_1) - \bigcirc f(x_2)| >
  c$. Now approximate $f$ from $A$ by using the logical operators and
  apply $\bigcirc$ (for more details see
  \cite{bw:behavioural-pseudometric,km:bisim-games-logics-metric,RadyBreugelExplProbBisim})
  to obtain a logical formula in $\lo(\bigsqcup A)$ that witnesses
  that the distance of $x_1,x_2$ is at least $c$.
\end{itemize}

Combined this gives us a construction similar to
\cite{RadyBreugelExplProbBisim}. There the price function is chosen
directly while we first determine a strategy. In
\cite{RadyBreugelExplProbBisim} it has been observed that one can not
always construct a formula that witnesses the exact distance between
$x_1,x_2$. This problem is avoided here, since we are interested in
certifying strict lower bounds.

\subparagraph*{Dual Case:}~

\noindent\emph{Basis:} In the dual case we choose a meet basis for
$\mathit{Dist}(X)$ that contains all distance functions
$\du{d}_{x_1,x_2}^c$ (for $x_1\neq x_2$, $c<1$) that are defined as
\[ \du{d}_{x_1,x_2}^c(y_1,y_2) =
  \begin{cases}
    c & \text{if $(x_1,x_2) = (y_1,y_2)$} \\
    1 & \text{otherwise}
  \end{cases}
\]
It can be shown that $\be$ is co-proper wrt.\ the basis (for more
details see the appendix).

\begin{toappendix}
  \begin{proposition}
    In the case of behavioural metrics the map $\be$ is co-proper
    wrt.\ the chosen meet basis (dual case).
  \end{proposition}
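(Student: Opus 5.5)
The approach is to unfold co-properness for a basis element and reduce the required co-compactness to ordinary compactness in $[0,1]^{X\times X}$, which is a finite-dimensional cube because $X$ is finite. We must show that for every meet-basis element $\du{b}=\du{d}^c_{x_1,x_2}$ (with $c<1$) the set $\be^{-1}(\mathop{\downarrow}\du{b})$ is co-compact: every cover by co-open (dual Scott open) sets has a finite subcover. Since $\du{b}$ is $1$ everywhere except at $(x_1,x_2)$, this set is $U=\{d\in\mathit{Dist}(X)\mid \be(d)(x_1,x_2)\le c\}$.

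\emph{Step 1 (case split on labels).} If $\ell(x_1)\neq\ell(x_2)$, then $\be(d)(x_1,x_2)=1>c$ for every $d$. So $U=\emptyset$, which is trivially co-compact.

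\emph{Step 2 (finite decomposition).} If $\ell(x_1)=\ell(x_2)$, I use the coupling characterization with vertex couplings recalled above: the infimum defining $\mathcal{K}$ is a minimum attained at one of the finitely many vertices in $\Gamma_V(\delta(x_1),\delta(x_2))$. Hence
\[ U=\bigcup_{C\in\Gamma_V(\delta(x_1),\delta(x_2))} U_C,\qquad U_C=\{d\mid \mathbb{E}_C[d]\le c\}. \]
(If the coupling form is only valid for pseudometrics, I would instead work with the price-function form $\alpha\circ\bigcirc\circ\gamma$ and restrict to the relevant $d$. I expect the vertex form to go through directly, since for arbitrary $d$ it is just the linear program of minimizing $\mathbb{E}_C[d]$ over the coupling polytope.) A finite union of co-compact sets is co-compact: given a cover of the union, take a finite subcover for each piece and combine them. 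So it suffices to show that each $U_C$ is co-compact. Each $U_C$ is a closed and bounded subset of $[0,1]^{X\times X}$ in the Euclidean topology, because $d\mapsto\mathbb{E}_C[d]$ is linear and hence continuous. So $U_C$ is Euclidean-compact.

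\emph{Step 3 (main obstacle: co-open sets are Euclidean open).} It remains to show that every co-open $O\subseteq\mathit{Dist}(X)$ is open in the Euclidean topology. Then a co-open cover of $U_C$ is a Euclidean-open cover, and it has a finite subcover by compactness.

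Let $d\in O$. By co-continuity, $d=\bigsqcap\{d'\mid d'\wag d\}$, and this set is filtered. Here $d'\wag d$ means pointwise $d<d'$, or equality at coordinates where the value is $1$. Since $O$ is co-open, it is inaccessible by filtered meets, so some $d'\wag d$ lies in $O$. Because $O$ is downward closed, $\mathop{\downarrow} d'\subseteq O$.

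Now $\mathop{\downarrow} d'$ contains a Euclidean neighbourhood of $d$: every $e$ with $|e(y)-d(y)|<\min_{y:\,d(y)<1}(d'(y)-d(y))$ satisfies $e\le d'$ pointwise. This works because $X\times X$ is finite, so the minimum is positive, and coordinates where $d$ equals $1$ impose no constraint. Thus $O$ is Euclidean open, which finishes the argument.

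The delicate points I expect are, first, the convention on coordinates equal to $1$ (and on the diagonal) in the way-above relation, and second, making sure the decomposition of Step 2 is valid for all $d\in\mathit{Dist}(X)$, not only for pseudometrics.
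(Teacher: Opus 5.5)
Your Steps 1 and 3 are sound. Step 3 is in fact the statement co-compactness actually needs, namely that co-open sets are Euclidean open; the paper only remarks this for Scott-open sets.

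The gap is in Step 2. The identity $\be(d)(x_1,x_2)=\min_{C\in\Gamma_V(\delta(x_1),\delta(x_2))}\mathbb{E}_C[d]$ holds only when $d$ is a pseudometric. The paper states the coupling form only in that case, and for this reason adds pseudometric constraints in its strategy computation. But $\be^{-1}(\mathop{\downarrow}\du{b})$ ranges over all of $\mathit{Dist}(X)$.

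Since $\mathcal{K}=\alpha\circ\bigcirc\circ\gamma$, you have $\gamma(d)=\gamma(\bar d)$, where $\bar d$ is the largest pseudometric below $d$, i.e.\ the shortest-path closure of $(y,y')\mapsto\min(d(y,y'),d(y',y))$. Kantorovich--Rubinstein duality then gives $\be(d)(x_1,x_2)=\min_C\mathbb{E}_C[\bar d]$, which can be strictly smaller than $\min_C\mathbb{E}_C[d]$.

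Here is a concrete counterexample. Take $\ell(x_1)=\ell(x_2)$, and let $\delta(x_1),\delta(x_2)$ be the Dirac distributions on distinct states $y_1,y_2$. Let $d$ be $0$ on the pairs $\{y_1,z\}$ and $\{z,y_2\}$ (both orders) for a third state $z$, and $1$ on all other pairs. Every $f\in\gamma(d)$ satisfies $f(y_1)=f(z)=f(y_2)$, so $\be(d)(x_1,x_2)=0\le c$ and $d\in U$. Yet the only coupling is the Dirac coupling on $(y_1,y_2)$, which gives $\mathbb{E}_C[d]=1>c$, so $d\notin\bigcup_C U_C$.

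Hence the decomposition $U=\bigcup_C U_C$ is false, and a finite subcover of $\bigcup_C U_C$ need not cover $U$. Your expectation that the vertex form \emph{goes through directly} conflates two things. The linear program $\min_C\mathbb{E}_C[d]$ does attain its minimum at a vertex for any $d$. But that linear program is the value of $\be$ only on pseudometrics. Restricting to pseudometrics is not an option either, because co-compactness concerns the whole preimage.

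There are two easy repairs.
\begin{enumerate}
\item Keep the decomposition but write $U=\bigcup_{C\in\Gamma_V}\{d\mid\mathbb{E}_C[\bar d]\le c\}$. The map $d\mapsto\bar d$ is Euclidean continuous, being a minimum over finitely many simple paths of sums of continuous functions. So each piece is still closed.
\item Drop the decomposition and argue as the paper does: $d\mapsto\be(d)$ is continuous for the sup metric, so $U$ is closed in the compact cube $[0,1]^{X\times X}$ and hence Euclidean compact.
\end{enumerate}
The paper phrases the continuity as non-expansiveness. On non-pseudometric arguments one only gets a Lipschitz bound: changing the four zero entries in the example above to $\epsilon$ moves $\be(d)(x_1,x_2)$ by $2\epsilon$. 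Continuity is all that is used, though. With either repair, your argument coincides with the paper's: Euclidean closedness in the cube, combined with your Step 3.
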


\begin{proof}
  Let $\du{b} = \du{d}_{x_1,x_2}^c$ an element of the meet basis. We
  now argue that $\be^{-1}(\mathop{\downarrow} \du{b})$ for a basis element
  $\du{b}$ is always a compact set. It is a closed set in the
  Euclidean topology on $[0,1]^{X\times X}$ (that is generated by
  either the Euclidean or the sup metric). And since $\be$ is
  non-expansive wrt.\ the $\sup$ metric, hence continuous, we can
  infer that $\be^{-1}(\mathop{\downarrow} \du{b})$ is closed wrt.\ the
  Euclidean topology on $[0,1]^{X\times X}$. The space
  $[0,1]^{X\times X}$ is compact, from which we can infer that
  $\be^{-1}(\mathop{\downarrow} \du{b})$ is compact in the Euclidean
  topology. Since every Scott-open set is also open in the Euclidean
  topology it is also compact in the Scott topology.
\end{proof}
\end{toappendix}

\medskip

\noindent\emph{Strategy computation
  (cf. Prop.~\ref{prop:winning-strategy-primal-way-below}):}
The computation of the strategy
$\strat_{d,\be}^\forall(\du{d}_{x_1,x_2}^c)$ in the dual case works
analogously to the primal case. This is due to the fact that
$d_{x_1,x_2}^c \ll \mu\be$ iff
$\mu\be\not\sqsubseteq \du{d}_{x_1,x_2}^c$ and the orders $\ll$,
$\wal$ coincide (on elements different from $0,1$). Furthermore the
requirement that for all moves $d$ of $\exists$ with
$\be(d)\sqsubseteq b$ there exists a basis element $\du{b'}$ of the
chosen finitary strategy such that $d\wal \du{b'}$ can also be ensured
via the inequality involving couplings as above. 

\medskip

\noindent\emph{Auxiliary functions
  (cf. Sct.~\ref{sec:auxiliary-functions}):} The auxiliary functions
$\ads$, $\adc$ can also be defined analogously to the primal case. In
addition:
\begin{itemize}
\item $Z(\du{e},\du{d})$: $\du{e},\du{d}\in\mathit{Dist}(X)$ with
  $\du{e}\not\sqsubseteq\du{d}$, i.e., there exists $x_1,x_2\in X$
  with $\du{e}(x_1,x_2) > \du{d}(x_1,x_2)$. In this case choose $c$
  such that $\du{e}(x_1,x_2) > c > \du{d}(x_1,x_2)$ and return
  $\du{d}_{x_1,x_2}^c$.
\end{itemize}

\subsection{Termination Probabilities in Markov Chains}
\label{sec:inst-term-mc}

In this section, we consider unlabelled Markov chains
\cite{gs:markov-chains,bk:principles-mc} and witness termination
probabilities of states. We fix a Markov chain $(X, T, \delta)$ which
has a finite state space $X$, a subset of terminal states
$T\subseteq X$ (which do not have outgoing transitions) and a
probabilistic transition function
$\delta\colon X\backslash T\to \mathcal{D}(X)$, where $\mathcal{D}(X)$
is a set of discrete probability distributions. The termination
probability of a state $x\in X$ is the probability that a run starting
from $x$ will eventually terminate in a state in $T$.

\subparagraph*{Lattices, Functions and Galois Connections:}~

\noindent\emph{Behaviour:} On the behaviour side we use the lattice $\lt{B}=[0,1]^X$,
with function $\be\colon\lt{B}\to \lt{B}$ defined as:
\[ \be(f)(x) = \begin{cases}
    1 & \text{if $x\in T$} \\
    \sum_{y\in X} \delta(x)(y) \cdot f(y) & \text{otherwise}.
  \end{cases} \] This function is clearly monotone and its least
fixpoint $\mu \be$ assigns to each state its termination probability.

\medskip

\noindent\emph{Logic:} Witnesses are \emph{trees}, where a tree is either a
terminal node $t\in T$, or, of the form
$x\rightarrow \mathit{Tr}_1,\dots, \mathit{Tr}_k$ for
$x\in X\setminus T$ and where the $\mathit{Tr}_i$ are trees.  Every
tree has a root, defined by $\mathit{root}(t)=t$, for $t\in T$, and
$\mathit{root}(x\rightarrow \mathit{Tr}_1,\dots, \mathit{Tr}_k)= x$.
We require that in a tree
$x\rightarrow \mathit{Tr}_1,\dots, \mathit{Tr}_k$ the children
$\mathit{Tr}_1,\dots, \mathit{Tr}_k$ all have different roots.  The
set of all trees will be denoted as $\mathit{Trees}$.  For a set of
trees $\mathcal{T}\subseteq \mathit{Trees}$, we write
$\mathcal{T}_x = \{\mathit{Tr}\in \mathcal{T} \mid
\mathit{root}(\mathit{Tr})= x\}$.  The degree of a tree corresponds to
its height, where the height of $t\in T$ is~$1$.

We consider a map $\mathit{pt}: \mathit{Trees} \rightarrow [0,1]$,
where $\mathit{pt}(\mathit{Tr})$ under-estimates the termination
probability from $\mathit{root}(\mathit{Tr})$, based on the paths in
$\mathit{Tr}$.
\[\mathit{pt}(\mathit{Tr}) =
\begin{cases}
  1 & \mbox{if $\mathit{Tr} = t\in T$} \\
  \sum_{i=1}^{k} \delta(x)(\mathit{root}(\mathit{Tr}_i)) \cdot
  pt(\mathit{Tr}_i) & \mbox{if $\mathit{Tr} = x\rightarrow \mathit{Tr}_1, \dots,
  \mathit{Tr}_k$}
\end{cases}\]

On the logic side we use the lattice
$\lt{L}= \mathcal{P}(\mathit{Trees})$ (with inclusion order) with
functions $\lo\colon \lt{L}\rightarrow \lt{L}$:
\[ \lo(\mathcal{T}) = \{x\rightarrow \mathit{Tr}_1,\dots,\mathit{Tr}_k\mid x\in
  X\backslash T, \mathit{Tr}_i\in \mathcal{T},
  \mathit{root}(\mathit{Tr}_i)\not=\mathit{root}(\mathit{Tr}_j) \text{
    for } i\not=j \} \cup \{t\in T\},\] the least fixpoint of which is
the set of all trees. Since the trees are finitely branching, it can
easily be seen that $\lo$ is continuous, the same holds for $\be$.

\medskip

\noindent\emph{Galois Connection:} The Galois connection is given as
follows:
\[\alpha(\mathcal{T})= \lambda x. \sup_{\mathit{Tr}\in \mathcal{T}_x}
  \mathit{pt}(\mathit{Tr}) \qquad \gamma(f)= \{ \mathit{Tr}\in
  \emph{Trees} \mid \mathit{pt}(\mathit{Tr})\leq
  f(\mathit{root}(\mathit{Tr}))\}. \] Here, $\alpha$ maps a set of
trees $\mathcal{T}$ to a function that provides a lower bound
for the termination probability of a
node $x$, based on the trees in $\mathcal{T}_x$.  On the other hand,
$\gamma$ maps a function to all trees $\mathit{Tr}$ which induce a
lower value for the respective assignment.

\begin{toappendix}
  We will show that $\mathit{pt}$ in fact under-approximates the
  termination probability.

\begin{lemma}
  For each tree $\mathit{Tr}\in\mathit{Trees}$ it holds that
  $\mathit{pt}(\mathit{Tr}) \le \mu \be(\mathit{root}(\mathit{Tr}))$.
\end{lemma}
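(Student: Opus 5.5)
We argue by structural induction on trees, which is well-founded because every tree is a finite object; equivalently, we induct on the height of $\mathit{Tr}$. The claim to prove is $\mathit{pt}(\mathit{Tr}) \le \mu\be(\mathit{root}(\mathit{Tr}))$. Throughout we use only the fixpoint equation $\mu\be = \be(\mu\be)$ and the definitions of $\mathit{pt}$ and $\be$.

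\emph{Base case.} Let $\mathit{Tr} = t$ with $t\in T$. Then $\mathit{pt}(t) = 1$. By the first clause of the definition of $\be$ we also have $\mu\be(t) = \be(\mu\be)(t) = 1$. Hence the inequality holds with equality.

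\emph{Inductive step.} Let $\mathit{Tr} = x\rightarrow \mathit{Tr}_1,\dots,\mathit{Tr}_k$ with $x\in X\setminus T$, and write $y_i = \mathit{root}(\mathit{Tr}_i)$. By the well-formedness condition on trees the $y_i$ are pairwise distinct. The induction hypothesis gives $\mathit{pt}(\mathit{Tr}_i)\le \mu\be(y_i)$ for each $i$. Since the weights $\delta(x)(y_i)$ are non-negative, we obtain
\[ \mathit{pt}(\mathit{Tr}) = \sum_{i=1}^k \delta(x)(y_i)\cdot \mathit{pt}(\mathit{Tr}_i) \le \sum_{i=1}^k \delta(x)(y_i)\cdot \mu\be(y_i) \le \sum_{y\in X}\delta(x)(y)\cdot\mu\be(y) = \be(\mu\be)(x) = \mu\be(x). \]

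The only step needing care is the second inequality in this chain. Because the roots $y_i$ are pairwise distinct, the sum over $i$ is a sub-sum of the sum over all $y\in X$. The omitted terms are products of non-negative numbers, so dropping them can only decrease the sum. This is exactly why the definition of trees requires children to have distinct roots: without that condition, a state could be counted more than once and the bound could fail. The last two equalities are the definition of $\be$ for $x\notin T$ followed by the fixpoint property of $\mu\be$, which completes the induction.
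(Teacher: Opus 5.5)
Your proof is correct and follows essentially the same route as the paper. The paper likewise inducts on the height (degree) of the tree and settles terminal nodes via $\mu\be(t)=\be(\mu\be)(t)=1$; in the inductive step it uses exactly your chain of inequalities, with the pairwise distinct roots of the children justifying the step to $\sum_{y\in X}\delta(x)(y)\cdot\mu\be(y)=\be(\mu\be)(x)=\mu\be(x)$.
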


\begin{proof}
  We show by induction on $i$ that whenever
  $\degr(\mathit{Tr})\le i$ for a tree $\mathit{Tr}$, then
  \[ \mathit{pt}(\mathit{Tr}) \le
    \mu\be(\mathit{root}(\mathit{Tr})). \]
  \begin{itemize}
  \item $i=1$: In this case $\mathit{Tr} = t\in T$ and we obtain
    $\mathit{pt}(\mathit{Tr}) = 1$. On the other side of the
    inequality we also get $\mu\be(t) = \be(\mu\be)(t) = 1$.
  \item $i\to i+1$: In this case either $\mathit{Tr} = t\in T$ and the
    inequality holds for the same reasons as in the previous case. Or
    $\mathit{Tr} = x\to \mathit{Tr}_1,\dots,\mathit{Tr}_k$,
    $x\not\in T$, where $\degr(\mathit{Tr}_j) \le i$ for each
    $j\in\{1,\dots,k\}$. We obtain:
    \begin{eqnarray*}
      \mathit{pt}(\mathit{Tr}) & = & \sum_{i=1}^k
      \delta(x)(\mathit{root}(\mathit{Tr}_i))\cdot
      \mathit{pt}(\mathit{Tr}_i) \\
      & \le & \sum_{i=1}^k
      \delta(x)(\mathit{root}(\mathit{Tr}_i))\cdot
      \mu\be(\mathit{root}(\mathit{Tr}_i)) \\
      & \le & \sum_{y\in X} \delta(x)(y)\cdot
      \mu\be(y) \\
      & = & \be(\mu \be)(x) = \mu\be(x)
    \end{eqnarray*}
    where the first inequality uses the induction hypothesis and the
    second inequality holds since the trees $\mathit{Tr}_i$ all have
    different roots. \qedhere
  \end{itemize}
\end{proof}

Second, we show that the requirement for fixpoint preservation
($\alpha(\mu\lo) = \mu\be$) indeed holds.

\begin{proposition}
  In the setting of termination probabilities in Markov chains, it
  holds that $\be\circ \mathop{\alpha} = \alpha \circ \lo$.
\end{proposition}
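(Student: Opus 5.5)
We show the identity $\be(\alpha(\mathcal{T}))(x) = \alpha(\lo(\mathcal{T}))(x)$ for fixed $\mathcal{T}\subseteq\mathit{Trees}$ and $x\in X$, distinguishing $x\in T$ and $x\notin T$.

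\emph{Case $x\in T$.} By definition $\be(\alpha(\mathcal{T}))(x)=1$. Since $x$ has no outgoing transitions, the trees in $\lo(\mathcal{T})$ with root $x$ are exactly the terminal tree $x$. Hence $\alpha(\lo(\mathcal{T}))(x)=\mathit{pt}(x)=1$.

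\emph{Case $x\notin T$.} Here
\[ \be(\alpha(\mathcal{T}))(x)=\sum_{y\in X}\delta(x)(y)\cdot\sup_{\mathit{Tr}\in\mathcal{T}_y}\mathit{pt}(\mathit{Tr}), \]
with $\sup\emptyset=0$. The trees in $\lo(\mathcal{T})_x$ are exactly the trees $x\to\mathit{Tr}_1,\dots,\mathit{Tr}_k$ with $\mathit{Tr}_i\in\mathcal{T}$ and pairwise distinct roots. Each such tree has value $\sum_i\delta(x)(\mathit{root}(\mathit{Tr}_i))\cdot\mathit{pt}(\mathit{Tr}_i)$.

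For ``$\ge$'', take such a tree. Since the roots are distinct, each term $\delta(x)(y)\cdot\mathit{pt}(\mathit{Tr}_i)$ with $y=\mathit{root}(\mathit{Tr}_i)$ is bounded by $\delta(x)(y)\cdot\sup_{\mathcal{T}_y}\mathit{pt}$. All remaining summands on the left are nonnegative, so the tree's value is at most $\be(\alpha(\mathcal{T}))(x)$. Taking the supremum over all such trees gives the inequality.

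For ``$\le$'', fix $\epsilon>0$. For each $y\in X$ with $\delta(x)(y)>0$ and $\mathcal{T}_y\neq\emptyset$, choose $\mathit{Tr}^y\in\mathcal{T}_y$ with $\mathit{pt}(\mathit{Tr}^y)\ge\sup_{\mathcal{T}_y}\mathit{pt}-\epsilon$. Since $X$ is finite, the tree $x\to(\mathit{Tr}^y)_y$ is finite and its children have distinct roots, so it lies in $\lo(\mathcal{T})$. Its value is at least $\be(\alpha(\mathcal{T}))(x)-\epsilon$, using $\sum_y\delta(x)(y)\le1$. Letting $\epsilon\to0$ gives the inequality.

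The main care point is the ``$\le$'' direction: finiteness of $X$ is what makes the assembled tree finitely branching. The edge cases also need checking. If all the relevant $\mathcal{T}_y$ are empty, the construction yields the leafless tree $x\to{}$ with $k=0$, which lies in $\lo(\mathcal{T})$ and has value $0$; this matches $\be(\alpha(\mathcal{T}))(x)=0$.
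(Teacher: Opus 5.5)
Your proof is correct and follows essentially the same route as the paper: a case split on $x\in T$, then for $x\notin T$ the distinct-roots bound for any tree in $\lo(\mathcal{T})$, and a choice of one (near-)optimal tree per successor for the converse. Your $\epsilon$-argument simply makes explicit the paper's interchange of the finite sum with the supremum over choice functions $g$; for $x\in T$, the conclusion actually comes from the definition of $\lo$, which only builds trees $x\to\dots$ for $x\notin T$, rather than from the absence of transitions.
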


\begin{proof}
  Let $\mathcal{T}$ be a set of trees and let $f = \alpha(\mathcal{T})$, where
  $f(x) = \sup_{\mathit{Tr}\in \mathcal{T}_x} \mathit{pt}(\mathit{Tr})$.

  We define $X' = \{x\in X \mid \mathcal{T}_x\neq\emptyset\}$ as those states
  that have witness trees in $\mathcal{T}$. Note that $f(x) = 0$ whenever
  $x\not\in X'$. For $t\in T$ we have $\be(f)(t) = 1$. And for
  $x\in X\backslash T$ we obtain:
  \begin{align*}
    \be(f)(x) & = \sum_{y\in X} \delta(x)(y)\cdot f(y)  \\
    & = \sum_{y\in X'} \delta(x)(y)\cdot f(y)  \\
    & = \sum_{y\in X'}  \delta(x)(y)\cdot \sup_{\mathit{Tr}\in \mathcal{T}_x} \mathit{pt}(\mathit{Tr})  \\
    & = \sup_{\substack{g\colon X'\rightarrow \mathcal{T} \\ g(y)\in \mathcal{T}_y}}
    \sum_{y\in X'} \delta(x)(y)\cdot \mathit{pt}(g(y))
  \end{align*}
  The last equality uses a supremum over all choice functions $g$,
  which for any state $y\in X'$ choose a tree in $\mathcal{T}$ with
  $y$ as the root.

  For the other side we set $f' = \alpha\circ\lo(\mathcal{T})$ where $\lo (\mathcal{T})$
  is defined as above. For a state $t\in T$ we have $f'(t) = 1$ since
  all trees with root $t$ evaluate to $1$. Now let
  $x\in X\backslash T$:
  \begin{align*}
    f'(x) & = \sup_{\substack{\mathit{Tr}\in \lo (\mathcal{T}) \\
        \mathit{root}(\mathit{Tr})=x}}
    \mathit{pt}(\mathit{Tr}) \\
    & = \sup_{\substack{\mathit{Tr}_1,\dots, \mathit{Tr}_k\in \mathcal{T} \\
        \mathit{root}(\mathit{Tr}_j)\neq
        \mathit{root}(\mathit{Tr}_\ell)\text{ if $j\neq\ell$}}}
    \sum_{i=1}^{k} \delta(x)(\mathit{root}(\mathit{Tr}_i))\cdot \mathit{pt}(\mathit{Tr}_i) \\
    & = \sup_{\substack{g\colon X'\rightarrow \mathcal{T} \\ g(y)\in \mathcal{T}_y}}
    \sum_{y\in X'} \delta(x)(y)\cdot \mathit{pt}(g(y))
  \end{align*}
  In the last line we again uses choice functions as above. The last
  equality holds since the supremum is reached in a sum where every
  state $y\in X'$ occurs as a root of a tree $g(y)$.

  Hence we have $\be(f)(x) = f'(x)$ for all $x\in X$ and therefore
  $\be\circ \mathop{\alpha} = \alpha \circ \lo$.
\end{proof}
\end{toappendix}

We now construct witnesses that certify termination probabilities for
the primal and for the dual case.

\subparagraph*{Primal case:}~

\noindent\emph{Basis:} for the join basis of $\lt{L}$ 
we take the singletons $\{\mathit{Tr}\}$, where
$\mathit{Tr}\in \mathit{Trees}$ is a tree. For the join basis of
$\lt{B}$ we take all functions $f^{c}_x$ (for $x\in X$, $c>0$) where
\[ f^{c}_x (y) =
  \begin{cases}
    c & \text{ if } x=y \\
    0 & \text{otherwise}
  \end{cases} \]

\noindent\emph{Strategy computation
  (cf. Prop.~\ref{prop:winning-strategy-primal-way-below}):}
let $k = \degr(f_x^c)$. If $k=1$, $\exists$ plays the empty
  set. Otherwise we have to solve the following inequalities (where
$d_k = \be^k(0)$) in order to obtain values $c_i\in[0,1]$ for the
successors $x_1,\dots,x_n$ of $x$:
\[
  \sum_{i=1}^{n} \delta(x)(x_i)\cdot c_i > c \qquad\qquad d_{k-1}(x_i)
  > c_i \mbox{ or } c_i=0
\]
Then collect all such $f_{x_i}^{c_i}$ where $c_i\neq 0$ as finitary
strategy $\strat_{p,\be}^\exists(f_x^c)$.

\medskip

\noindent\emph{Auxiliary functions
  (cf. Sct.~\ref{sec:auxiliary-functions}):} Let $b = f_x^c$
and $A \subseteq \mathit{Trees}$. We define:
\begin{itemize}
\item $\aps(b,A) = \mathit{Tr}$ where $\mathit{Tr}\in A$ is a tree
  such that $\mathit{pt}(\mathit{Tr}) > c$.
\item $\apc(b,A)$: if $A=\emptyset$ it must hold that $x\in T$, hence
  choose $x$. Otherwise choose
  $\{\mathit{Tr}_1,\dots,\mathit{Tr}_n\}\subseteq A$ such that each
  tree $\mathit{Tr}_i$ witnesses a maximal termination probability for
  one of the successors of $x$ and all trees have different
  roots. Then return $x\to \mathit{Tr}_1,\dots,\mathit{Tr}_n$.
\end{itemize}

\begin{example}
  
  Consider the Markov chain below with states $x, y$, where $y$ is a
  terminating state, and $x$ can transition to both $x$ and $y$ with
  probability $\sfrac{1}{2}$.  The smallest fixpoint of the behaviour
  function $\mu \be$ maps $x$ and $y$ to one, i.e.\ both states
  terminate almost surely. We want to witness that the probability for
  $x$ is strictly greater than $\sfrac{1}{2}$.

  \begin{center}
    \begin{tikzpicture}[node distance=1.5 and 1, on grid, shorten
      >=1pt, >=stealth', semithick]

      \begin{scope}[state, inner sep=3pt, minimum size=0pt]
        \draw node [draw] (q0) {\(x\)}; \draw node [draw, right=2cm of
        q0, accepting] (q1) {\(y\)};
      \end{scope}
      
      \begin{scope}[->]
        \draw (q0) edge node [midway, above] {\(\frac{1}{2}\)} (q1);
        \draw (q0) edge[loop] node [midway, above] {\(\frac{1}{2}\)}
        (q0);
      \end{scope}
    \end{tikzpicture}
  \end{center}

  We will denote a function $f: x\mapsto a, y\mapsto b$ by $(a,b)$.
  Basis elements are tuples $f^a_x = (a, 0)$ or $f^b_y = (0,b)$, where
  $a,b>0$.

  Hence we start with the basis element $f^{\sfrac{1}{2}}_x$ and
  observe that $\degr(f^{\sfrac{1}{2}}_x) = 3$, i.e. $i=3$ is the
  first index for which $beh^i((0,0))(x)> \sfrac{1}{2}$ holds.

  \smallskip

  \noindent\emph{Strategy computation:} The computation of the
  strategies involves solving the following inequalities, and thereby
  distributes the required probability of $\sfrac{1}{2}$ to the
  successors of $x$:
  \begin{align*}
    \delta(x)(x)\cdot c_1 + \delta(x)(y)\cdot c_2 = \sfrac{1}{2}\cdot
    c_1 + \sfrac{1}{2}\cdot c_2 > \sfrac{1}{2} \\
    \be^2((0,0))(x) = \sfrac{1}{2} >c_1 \\
    \be^2((0,0))(y) = 1> c_2
  \end{align*}
  Player $\exists$ can choose any $c_1 = \sfrac{1}{2} -\epsilon$,
  $c_2=1-\delta$ ($\epsilon,\delta>0$) that satisfy the first
  inequality. Setting $c_1=0$ or $c_2=0$ is not an option, as then the
  first inequality is not satisfied. Assume that $\exists$ chooses
  $c_1 = \sfrac{1}{4}$ and $c_2 = \sfrac{7}{8}$. Hence the (finitary)
  strategy of $\exists$ is to play $d = \bigsqcup F$ with
  $F = \{f_x^{\sfrac{1}{4}},f_y^{\sfrac{7}{8}}\}$.  Note that
  $\degr(f^{\sfrac{1}{4}}_x) = 2$ and $\degr(f^{\sfrac{7}{8}}_y) = 1$.

  Now, the probability of $c_1 = \sfrac{1}{4}$ needs to be distributed
  to the successors of $x$. Here we assume that $\exists$ chooses
  values $0,\sfrac{7}{8}$, resulting in the strategy
  $F' = \{f_y^{\sfrac{7}{8}}\}$. Finally, for $f_y^{\sfrac{7}{8}}$ it
  is sufficient to play the bottom element $(0,0)$, i.e., the empty
  strategy.

  \smallskip

  \noindent\emph{Playing the game:} 
  Starting with $f^{\sfrac{1}{2}}_x \ll \mu \be$, player $\exists$
  plays $\bigsqcup F = (\sfrac{1}{4},\sfrac{7}{8})$ as determined
  above.  Now $\forall$ has to answer either with
  $(\sfrac{1}{4}-\epsilon,0)$ or $(0,\sfrac{7}{8}-\delta)$. As
  detailed above, $\exists$ can win in the latter case with the empty
  strategy, in the former case $\exists$ plays
  $\bigsqcup F' = (0,\sfrac{7}{8})$, winning the game in the next
  step, since by assumption $\forall$ can not answer with
  $\bot = (0,0)$.

  \smallskip

  \noindent\emph{Constructing the witness:} Now we construct witnesses
  recursively from the strategy. More concretely, we obtain:
  \begin{align*}
    \mathit{wit}(f_y^{\sfrac{7}{8}}) &=
      W_{p,\lo}(f_y^{\sfrac{7}{8}},\emptyset) = y, \qquad \text{(base
        case)} \\ 
    \mathit{wit}(f_x^{\sfrac{1}{4}}) &=
      W_{p,\lo}(f_x^{\sfrac{1}{4}},\mathit{wit}[\{f_y^{\sfrac{7}{8}}\}]) = 
      W_{p,\lo}(f_x^{\sfrac{1}{4}},\{y\}) = 
        x \rightarrow y \\ 
    \mathit{wit}(f_x^{\sfrac{1}{2}}) &=
      W_{p,\lo}(f_x^{\sfrac{1}{2}},\mathit{wit}[\{f_x^{\sfrac{1}{4}},f_y^{\sfrac{7}{8}}\}]) = 
      W_{p,\lo}(f_x^{\sfrac{1}{2}},\{x\rightarrow y, y\}) =
        x \rightarrow(x\rightarrow y), y 
  \end{align*}

  Thus, the termination probability of $x$ being greater than
  $\sfrac{1}{2}$ is witnessed by the tree
  $x \rightarrow(x\rightarrow y), y$, where the subtree $y$ witnesses
  the termination probability of $y$, which is greater than
  $\sfrac{7}{8}$, and the subtree $x\rightarrow y$ witnesses the
  termination probability of $x$ of at least $\sfrac{1}{4}$.  In fact,
  the resulting tree even gives the value $\sfrac{3}{4}$.
\end{example}

\subparagraph*{Dual Case:}~

\noindent\emph{Basis:} For the dual case, a meet basis of $\lt{L}$ is given by the
co-singletons
$\overline{\{\mathit{Tr}\}} = \mathit{Trees}\backslash\{\mathit{Tr}\}$,
and for $\lt{B}$, we define the basis elements to be all functions
$\du{f}^c_x$ ($x\in X$, $c < 1$) where
\[ \du{f}^c_x (y) =
  \begin{cases}
    c & \text{if } x=y \\
    1 & \text{otherwise}
  \end{cases} \]

\noindent\emph{Strategy computation
  (cf. Prop.~\ref{prop:winning-strategy-primal-way-below}):}
$\strat_{d,\be}^\exists(\du{f}^c_x)$ is determined analogous to the
primal case.

\medskip

\noindent\emph{Auxiliary functions
  (cf. Sct.~\ref{sec:auxiliary-functions}):} The auxiliary functions
$\ads$, $\adc$ can also be defined analogously to the dual case. In
addition:
\begin{itemize}
\item $Z(\du{e},\du{d})$: $\du{e},\du{d}\in [0,1]^X$ with
  $\du{e}\not\sqsubseteq\du{d}$, i.e., there exists $x\in X$
  with $\du{e}(x) > \du{d}(x)$. In this case choose $c$
  such that $\du{e}(x) > c > \du{d}(x)$ and return
  $\du{b'} = \du{f}_x^c$.
\end{itemize}

\section{Conclusion}
\label{sec:conclusion}

  We have shown how to generate witnesses, generalizing 
  the construction of distinguishing formulas
  \cite{c:automatically-explaining-bisim,RadyBreugelExplProbBisim} for
  explaining non-bisimilarity or certifying lower bounds. We
  concentrated in particular on guarantees for obtaining finitary
  strategies, which can then be transformed into witnesses. The
  picture below summarizes the paper: on the logic side we have
  both witnesses and strategies $\strat_{p,\lo}^\exists$ proving that
  a witness is generated by the logic function. 
  Proposition~\ref{Wp,lo,E to Wp,be,E} explains how to transform
  witnesses into a winning strategy $\strat_{p,\be}^\exists$ of the
  primal way-below game, while Theorem~\ref{thm:wit-of-StrtEpb} states
  the other direction. Analogously for strategies
  $\strat_{d,\be}^\forall$ of the dual game, where the connection
  arises from Proposition~\ref{Wp,L,E to Wd,B,A} and
  Theorem~\ref{thm:stratFdbToWit}. Existence of finitary strategies in
  the various games is guaranteed by
  Propositions~\ref{prop:winning-strategy-primal-way-below}
  and~\ref{prop:finite-winning-strategy-dual}.

\begin{wrapfigure}{r}{.38\textwidth}
\begin{tikzpicture}[ every state/.style={draw=none, inner sep=0,outer sep=0, rectangle, 
    }]
    \node[state]  (l_0) []                       {$\strat_{p,\lo}^\exists$};
    \node[state]  (l_1) [right=.3cm of l_0]      {\ witness};
    \node[state]  (xx)  [below=2.5cm of l_0]       {}; 
    \node[state]  (b_1) [below right=.7cm of l_1]{$\strat_{p,\be}^\exists$};
    \node[state]  (yy)  [below =1.122cm  of b_1]         {}; 
    \node[state]  (b_2) [above right=.7cm of l_1]{$\strat_{d,\be}^\forall$};
    \node[state]  (b_0) [below=2.5cm of l_1]       {existence of finite strategies};

    \path[->,font=\footnotesize] 
      (l_0) edge [] node [below]  {}   (l_1)
      (l_0) edge [bend left] node [above, sloped]  {Prop. \ref{Wp,L,E to Wd,B,A}}   (b_2) 
      (l_0) edge [bend right] node [below, sloped]  {Prop. \ref{Wp,lo,E to Wp,be,E}}   (b_1) 
      (l_1) edge [] node [below]  {}   (l_0)
      (xx) edge [ ] node [above, sloped]   {Prop. \ref{prop:winning-strategy-primal-way-below}}(l_0) 
      (yy) edge [ ] node [left]   {Prop. \ref{prop:winning-strategy-primal-way-below}}(b_1) 
      (b_0) edge [bend right=70] node [below, sloped] {Prop.\ref{prop:finite-winning-strategy-dual}}(b_2) 
      (b_1) edge [bend left]     node [above right]  {Thm. \ref{thm:wit-of-StrtEpb}}   (l_1)
      (b_2) edge [bend right]    node[below right]   {Thm. \ref{thm:stratFdbToWit}}      (l_1) 
      ;
  \end{tikzpicture}
  \end{wrapfigure}

  In order to obtain formal computability results, it would in
  addition be necessary to make assumptions on the decidability of
  the order relation, computability of joins and auxiliary
  functions, etc. \cite{s:effectively-given-domains}.

  Instantiating this framework to the case of bisimilarity yields a
  game reminiscent of the bisimulation game in \cite{stirling99} and a
  witness construction similar to
  \cite{c:automatically-explaining-bisim}. We also rediscover a known
  construction for behavioural metrics and study a new example in the
  context of Markov chains.

  Another general framework for bisimilarity and behavioural metrics
  is coalgebra \cite{r:universal-coalgebra} in which sound and
  complete logics have been studied
  \cite{s:coalg-logics-limits-beyond-journal,p:coalgebraic-logic},
  giving rise to a Hennessy-Milner theorem. While there is work on the
  construction of distinguishing formulas in a qualitative coalgebraic
  setting \cite{kms:non-bisimilarity-coalgebraic}, there is -- to the
  best of our knowledge -- no general construction in the quantitative
  coalgebraic case. Furthermore our framework offers the flexibility
  of arbitrary lattices.

  The current work could be extended in several directions, such as
  exploring characteristic formulas (a characteristic formula of a
  given state characterizes all states that are in a preorder relation
  to the original state \cite{s:characteristic-formulae}) instead of
  distinguishing formulas or studying the use of the dual game on the
  logic side.

  We also plan to investigate further examples, where witnesses
  generated from the strategies of the primal and dual game might be
  completely different. Potential application areas are dataflow
  analysis and abstract interpretation where least and greatest
  fixpoints play a major role
  \cite{cc:ai-unified-lattice-model,nnh:program-analysis}.

  Furthermore we are interested in the connection to the codensity
  game \cite{kkhkh:codensity-games}. This game uses predicates in the
  ``behaviour game'' and might give rise to games that are
  simultaneously played on both lattices. In general we believe that
  the lattice-based approach can be used to classify and categorize
  various types of behavioural games that have been presented in the
  literature, such as
  \cite{stirling99,vjwb:explainability-probabilistic-game,fkp:expressiveness-prob-modal-logics,dlt:approx-analysis-prob}
  for concrete types of transition systems and
  \cite{km:bisim-games-logics-metric,fmskb:graded-monad-games,fsw:conformance-games-graded,kkhkh:codensity-games}
  in the coalgebraic setting.

  In addition we will further investigate the connection to apartness
  \cite{gj:apartness-bisimulation} in particular to
  \cite{tbbkr:witnesses-lower-bounds-beh-distances}. In the latter
  paper proof systems are used to obtain witnesses for lower bounds
  with the connection to games mentioned as future work.

     \bibliography{references}
    
\end{document}